\newcommand{\xoic}{\boldsymbol{x}^{[c_{\mathrm{o}}+1,c_{\mathrm{i}}]}}
\newcommand{\yoic}{\boldsymbol{y}^{[c_{\mathrm{o}}+1,c_{\mathrm{i}}]}}
\newcommand{\poic}{\boldsymbol{p}^{[c_{\mathrm{o}}+1,c_{\mathrm{i}}]}}
\newcommand{\xoici}{\boldsymbol{x}^{[c_{\mathrm{o}}+1,c_{\mathrm{i}}-1]}}
\newcommand{\yoici}{\boldsymbol{y}^{[c_{\mathrm{o}}+1,c_{\mathrm{i}}-1]}}
\newcommand{\poici}{\boldsymbol{p}^{[c_{\mathrm{o}}+1,c_{\mathrm{i}}-1]}}
\newtheorem{theorem}{Theorem}   
\def\BibTeX{{\rm B\kern-.05em{\sc i\kern-.025em b}\kern-.08em
		T\kern-.1667em\lower.7ex\hbox{E}\kern-.125emX}}
\newcommand{\leoname}{M-LEO}
\newcommand{\saginname}{SAGINs}
\newcommand{\methodname}{DV-MOSS}
\newcommand{\stagename}{JUBPA}
\begin{document}
\bstctlcite{IEEEexample:BSTcontrol}
	
	\title{Visibility-aware Satellite Selection and Resource Allocation in Multi-Orbit LEO Networks}
\author{Yingzhuo Sun,\IEEEmembership{~Student Member, IEEE,}
Yulan Gao,\IEEEmembership{~Member, IEEE,}
Ming Xiao,\IEEEmembership{~Senior Member, IEEE,}
Zhu Han,\IEEEmembership{~Fellow, IEEE,} and 
Octavia A. Dobre,\IEEEmembership{~Fellow, IEEE}
\thanks{This paper has been presented in part at the IEEE International Conference on Communications (ICC), 2025.}
\thanks{Y. Sun, Y. Gao, and M. Xiao are with the Division of Information Science and Engineering, KTH Royal Institute of Technology, 100 44 Stockholm, Sweden
(e-mail: \{yingzhuo, yulang, mingx\}@kth.se).}
\thanks{Z. Han is with the Department of Electrical and Computer Engineering, University of Houston, Houston, TX 77004, USA (e-mail: hanzhu22@gmail.com).}
\thanks{O. A. Dobre is with the Faculty of Engineering and Applied Science, Memorial University, St. John’s, NL A1B 3X5, Canada (e-mail: odobre@mun.ca).}
}

\maketitle
\begin{abstract}
Multi-orbit low-earth-orbit (\leoname{}) satellites communication is envisioned as a key infrastructure to deliver truly global coverage, enabling future services from space–air–ground integrated networks.
However, the optimized design of \leoname{} which jointly addresses satellite selection, association control, and resource scheduling while accounting for dynamic visibility in multi-orbit constellations still remains open. Satellites moving along distinct orbital planes yield phase-shifted ground tracks and heterogeneous, time-varying coverage patterns that significantly complicate the optimization.
To bridge the gap, we propose a dynamic visibility-aware multi-orbit satellite selection framework which can determine the optimal serving satellites across orbital layers. The framework is built upon Markov approximation and matching game theory to jointly address the challenges of time-varying visible satellite sets, fluctuating link quality due to varying visibility, and dynamic inter-user interference.
Specifically, we formulate a combinatorial optimization problem that maximizes the sum-rate under per-satellite power budgets and multi-orbit constellation size limitations. 
The problem is NP-hard because it entails a mixed-integer non-convex structure, combining discrete user association (UA) decisions with continuous power allocation, and an inherently non-convex sum-rate maximization objective. We address it through a problem-specific Markov approximation. 
Moreover, we alternately solve UA/bandwidth allocation via a matching-game mechanism and power allocation via a Lagrangian dual convex program, which together form a block coordinate descent method tailored to this problem.
Simulation results show that the proposed algorithm converges to a suboptimal solution across all scenarios.
Extensive experiments against four state-of-the-art baselines further demonstrate that our algorithm achieves, on average, approximately 7.85\% higher sum rate than the best-performing baseline.
\end{abstract}
	
\begin{IEEEkeywords}
\leoname{}, space–air–ground integrated networks, Markov approximation,  matching game theory.
\end{IEEEkeywords}
	
\section{Introduction}\label{introduction}

\IEEEPARstart{W}{ith} ever-growing demand for higher data rates and ubiquitous connectivity, the existing terrestrial networks are increasingly unable to satisfy the requirements of future communication services \cite{6G_1}, and leave users in remote, rural, and maritime areas underserved due to the lack of coverage \cite{LEO1}.
To overcome these limitations, satellite communications have emerged as a promising complement to terrestrial networks. 
In particular, multi-orbit low-earth-orbit (\leoname{}) satellite systems are envisioned as an infrastructure to provide seamless connectivity with truly global coverage, bridging underserved regions and enhancing the resilience of space-air-ground integrated networks (\saginname{}) \cite{LEO_potential}. 

\leoname{} systems are typically defined as satellites operating in orbits at altitudes between $500\mbox{ km}$ and $2{,}
000\mbox{ km}$. Their primary advantages include low launch costs, and low round-trip delay owing to the relatively short propagation distance. However, deploying large-scale \leoname{} constellations inevitably introduces significant network complexity due to the massive number of satellites and high velocity. 
In particular, the resulting highly dynamic constellation leads to a  time-varying set of visible satellites over any given region, thereby making adaptive satellite selection essential for ensuring reliable connectivity to user equipment (UEs).

Extensive research efforts have been devoted to addressing the challenges of satellite constellation design, user association (UA), and resource allocation (RA).
Among the early contributions in this area, \cite{how_many_sats} developed a three-dimensional constellation design algorithm to minimize the number of satellites while ensuring backhaul requirements.
Another corpus of literature has focused on UA and RA under fixed constellation settings.
For instance, \cite{RAwithoutSS3} formulated the UA problem between ground UEs and LEO satellites, but the study assumed only three fixed satellites and did not consider satellite selection.
Beyond fixed settings, several studies investigated satellite selection strategies.
To capture long-term selection dynamics, \cite{LongtermSSNoRA1} introduced a non-dominated sorting optimization algorithm for \leoname{} selection in global navigation satellite systems, considering signal strength, and residual visibility time.

Despite these advances, most existing studies assume fixed constellation, where the set of serving satellites is predetermined and static. 
These common assumptions may be reasonable for simplified scenarios with a small number of satellites, but they become unrealistic in large-scale \leoname{} systems. 
In practice, satellites in different orbital planes move continuously. This leads to  a dynamic and time-varying set of visible satellites over any specific region. 
This dynamic visibility introduces two critical challenges: 
\begin{enumerate}
    \item the number of visible satellites fluctuates significantly over time, ranging from dense situations with many candidate satellites to sparse conditions with only a few feasible connections;
    \item even if multiple satellites are simultaneously visible, their link qualities vary due to differences in elevation angles, residual visibility time, and interference conditions.
\end{enumerate}
These factors imply the number of satellites to be selected and the serving satellites must be optimized dynamically.

To address the challenges, we propose the dynamic visibility-aware multi-orbit satellite selection framework (\methodname{}), which integrates constellation-level satellite adaptation with fine-grained per-satellite selection across multiple orbital layers.
Particularly, we investigate a scenario in which a UE is simultaneously covered by multiple candidate satellites. Then the network controller must optimize the number of serving satellites, the satellite-UE associations, and the transmit powers. 
The objective is to maximize the sum-rate under per-satellite power budget constraints and overall constellation size limitations.
Notably, the number of visible \leoname{} satellites varies significantly across time slots.  This dynamic visibility makes the problem more challenging. 
The main innovations and contributions of this paper can be summarized as follows.
\begin{itemize}
\item We investigate the sum-rate maximization problem of the joint optimization of multi-orbit satellite selection, association control, and PA under constellation size constraints in an \leoname{} setting. 
Different from existing works assuming a fixed set of serving satellites, we explicitly model the temporal evolution of candidate satellite sets.  
Dynamic visibility together with constellation sizes changes the problem setting drastically. Since satellites in different orbital planes and with varying relative phases continuously enter and exit the cone-shaped visibility regions of ground UEs,  the feasible set of the problem is inherently time-varying. 
The formulated problem is NP-hard because it is a mixed-integer non-convex program: discrete UA variables are coupled with continuous power variables through a non-concave sum-rate objective and a non-concave minimum-rate constraint.
\item To solve the problem, we propose \methodname{} algorithms combining Markov approximation and  matching game theory.
Specifically, we design an ergodic Markov chain with tailored transition probabilities that guarantee convergence to its stationary distribution. 
Building on this reformulation, we develop a block coordinate descent (BCD) architecture, termed \stagename{}, for the joint optimization of UE association, bandwidth allocation (BA), and power allocation (PA).
In particular, matching game theory is employed to determine stable satellite-UE associations together with BA, whereas a Lagrangian dual-based method is adopted to optimize PA.

This integrated approach achieves sub-optimal performance with provable convergence of submodules and significantly lower complexity than heuristic alternatives. 
\item We conduct extensive experiments on the Walker Constellation Design Pattern\footnote{The Walker constellation has been widely adopted in practical deployments, including Starlink, OneWeb, and Kuiper systems. It is available at \url{https://innovationspace.ansys.com/courses/courses/intro-to-orbit-types/lessons/walker-constellation-lesson-9/}.} to validate the convergence properties of \methodname{} and to evaluate its performance.
The results demonstrate that \methodname{} consistently outperforms the best-performing baseline, achieving an average improvement of approximately 7.85\% in sum rates.
\end{itemize}

The remainder of the paper is organized as follows. Section~\ref{related work} discusses the related work. Section~\ref{System Model} describes the system model and problem formulation. The algorithm design and simulation results are presented in Section~\ref{Algorithm Design} and \ref{Simulation Results}, respectively. Finally, the conclusions are drawn in Section~\ref{Conclusions}.

\section{Related Work}\label{related work}
Research on \leoname{} satellite communications and RA in \saginname{} has gained significant attention in recent years.
Prior studies provide valuable insights into constellation design, UA, and RA, as well as satellite selection strategies. 
However, most of these works rely on simplified assumptions, such as fixed satellite constellations or the separate optimization of UA and RA, which limit applicability in large-scale, dynamic \leoname{} systems. 
Existing results can be roughly divided into four categories, as follows. 

{\em 1) Optimization with Fixed Constellations:}
A commonly adopted paradigm for studying UA and RA in satellite networks is to assume a fixed LEO satellite constellation. For instance, \cite{RAwithoutSS1, RAwithoutSS3, RAwithoutSS4,RAwithoutSS5,RAwithoutSS6} represent typical research based on this paradigm. 
Reference \cite{RAwithoutSS3} developed a 3D constellation coordinate model and aimed to maximize the minimum average throughput using an iterative algorithm that combines block coordinate descent with successive convex approximation. 
Similarly, \cite{RAwithoutSS4} introduced a two-way transmission model with unknown channel state information and designed a hierarchical multi-agent multi-armed bandit-based RA scheme. 
In \cite{RAwithoutSS5}, UA and RA were jointly optimized by formulating a problem that incorporates both base station backhaul and UE traffic demands, and is solved via binary relaxation and alternating optimization.
A deep reinforcement learning-based RA strategy was further proposed in \cite{RAwithoutSS6} to address low resource utilization caused by uneven traffic demands. 

{\em 2) Single-Satellite Selection Approaches:}
A few studies aim to optimize selecting a single serving LEO satellite for each UE. Specifically, \cite{SingleSS1} modeled the selection as a potential game, treating satellites and frequency bands as shared resources while UEs act as competing players. 
A random access algorithm within a software-defined networking framework was then developed to maximize the area sum-rate. 

{\em 3) Multi-Satellite or Multi-Orbit Selection Methods:}
Beyond single-satellite settings, the literature \cite{StaticMultipleSS1,StaticMultipleSS3,StaticMultipleSS4} investigated the selection of multiple LEO satellites in static systems. 
In \cite{StaticMultipleSS1}, the authors proposed a two-layer algorithm that combines difference-of-convex programming with coalition formation to jointly address beamforming and UA.
A comparative study in \cite{StaticMultipleSS3} analyzed three UA schemes using stochastic geometry as the evaluation framework.
More advanced optimization techniques were explored. Reference \cite{StaticMultipleSS4} applied alternating direction method of multipliers (ADMM), bisection, and interior-point methods to jointly solve satellite selection and access problems, targeting improvements in downlink sum-rate, uplink fairness, and energy efficiency.

{\em 4) Long-Term Multi-Satellite Selection:}
A few works investigated long-term strategies for selecting multiple LEO satellites. 
For instance, \cite{LongtermSSNoRA1,LongtermSSNoRA2,LongtermSSNoRA3,LongtermSSNoRA4,LongtermSSNoRA6} considered satellite selection without explicit UA or RA. 
In \cite{LongtermSSNoRA1}, an iterative optimization algorithm for the Global Navigation Satellite System systems was developed, incorporating signal strength, residual visibility time, and receiver sensitivity. 
The authors in \cite{LongtermSSNoRA2} addressed the coexistence between primary and secondary systems via the Lagrange relaxation and sub-gradient methods. 
Reference \cite{LongtermSSNoRA3} applied gated recurrent units to predict trajectories and design handover strategies for high-speed terminals. 
A dynamic greedy algorithm for handover based on UE grouping was proposed in \cite{LongtermSSNoRA4}. 
In \cite{LongtermSSNoRA6}, a conditional handover optimization strategy using service continuity graphs was designed to maximize a reward function and predict handover sequences while improving the stability. 

Meanwhile, references \cite{LongtermSSRA1,LongtermSSRA2,LongtermSSRA3,LongtermSSRA5,LongtermSSRA6} incorporated UA or RA alongside multi-satellite selection. 
In \cite{LongtermSSRA1}, a distributed massive multiple-input multiple-output-based network architecture was introduced, and an artificial intelligence-enabled joint PA and handover management scheme was proposed to enhance throughput and reduce handovers. 
Reference \cite{LongtermSSRA2} proposed a collaborative beamforming paradigm with an evolutionary multi-objective deep reinforcement learning algorithm, improving uplink performance for energy-limited terminals. 
In \cite{LongtermSSRA3}, the satellite-aircraft handover was modeled as a local cooperation game, where subcarrier and PA were jointly optimized via a parallel update strategy. 

Caching was considered in \cite{LongtermSSRA5}, where a deep reinforcement learning-based caching-aware handover strategy reduced failures and blocking under limited resources. 
In \cite{LongtermSSRA6}, a Lyapunov-based framework jointly optimized inter-satellite handover, beam hopping, and spectrum sharing to improve service satisfaction while mitigating interference. 

Different from the above studies, the proposed \methodname{} does not assume fixed constellation or static satellite amount.
\methodname{} is designed to handle dynamic visibility in \leoname{} systems by jointly optimizing satellite selection, UA, and RA under constellation size constraints, combining Markov approximation with matching game theory to tackle the challenges of time-varying candidate satellite sets and coupled decision variables.

\section{System Model and Problem Formulation}\label{System Model}
\subsection{Network Model}
Consider the downlink transmission of an \leoname{} satellite network where satellites distributed across $O$ orbital planes adaptively serve $J$ UEs within a designated area. 
Let $\mathcal{O}=\{1, \ldots, O\}$ and $\mathcal{U}=\{u_1, \ldots, u_J\}$ denote the sets of orbit indices and UEs, respectively.
Let $\mathcal{S}=\cup_{o=1}^O\mathcal{S}_o$ denote the set of satellites across $O$ orbital planes. $\mathcal{S}_o=\{ s_{o,1},\dots,s_{o,|\mathcal{S}_o|}\}$ represents the set of satellites on the $o\mbox{-th}$ orbital plane, and $|\mathcal{S}_o|$ denotes its cardinality.
The downlink spectrum is assumed to be partitioned into $K$ orthogonal and equal-bandwidth subcarriers that are represented by the set $\mathcal{B}=\{B_1, \ldots, B_K\}$. 

\subsection{Zenith Angle Rule}

The zenith angle of the satellites relative to UE is an important metric for satellite selection, as it directly influences the communication quality \cite{elevation_angle}.
To maintain line-of-sight (LoS) communication links, blockages caused by tall buildings should be avoided. 
A common method is to set a zenith-angle threshold when selecting satellites. 
Let $\varphi$ denote the maximum zenith angle at which a UE can observe satellites.
This threshold naturally defines a cone-shaped visibility region, and the satellites within this cone for $u_j$ constitute the candidate set $\mathcal{S}_{j}(t)$. 
The cone-shaped visibility region and the corresponding candidate set $\mathcal{S}_{j}(t)$ are illustrated in Fig. \ref{Cone_rule}. 
Based on the setup origin point, the coordinates of $u_j$, $s_{o,i}$, and the earth center are denoted as $\boldsymbol{a}_j(t)$, $\boldsymbol{a}_{o,i}(t)$, and $\boldsymbol{a}_{\text{e}}$, respectively. 
Accordingly, $\mathcal{S}_{j}(t)$ can be defined as
\begin{equation}
\mathcal{S}_{j}(t) = \{s_{o,i}\mid\langle\boldsymbol{a}_j(t)-\boldsymbol{a}_{\text{e}},\boldsymbol{a}_{o,i}(t)-\boldsymbol{a}_j(t) \rangle\leq \cos\varphi\},
\end{equation}
where $\langle \cdot,\cdot\rangle$ represents the operator of calculating the normalized inner product between two vectors.
\begin{figure}[tbp]
\centerline{\includegraphics[width=0.45\textwidth]{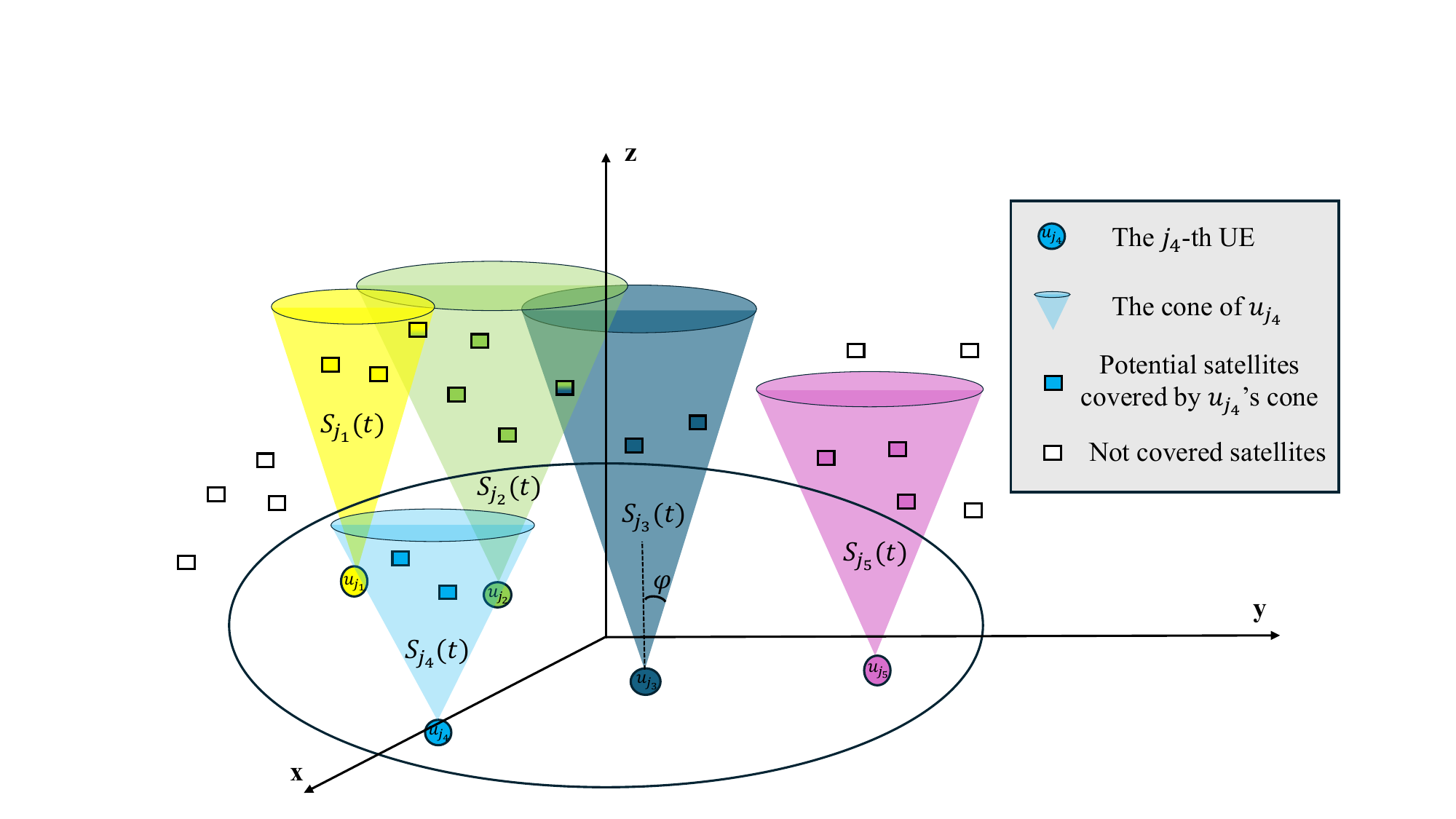}}
\caption{Illustration of the zenith angle rule.}
\label{Cone_rule}
\end{figure}
All satellites located within the individual cone regions collectively form the union set are denoted as $\tilde{\mathcal{S}}(t) = \cup_j \mathcal{S}_{j}(t)$. 
The serving satellites are then selected from $\tilde{\mathcal{S}}(t)$.
To indicate whether a specific satellite $s_{o,i}$ is actively providing service at time $t$, we introduce a binary decision variable $z_{o,i}(t) \in \{0,1\}$.  $z_{o,i}(t) = 1$ iff $s_{o,i}$ is selected; otherwise, $z_{o,i}(t) = 0$.
The final set of optimized satellites is denoted as $\tilde{\mathcal{S}}^*(t)\subseteq \tilde{\mathcal{S}}^(t)$.
Therefore, the set of satellites providing service at time slot $t$ is $\tilde{\mathcal{S}}^*(t)=\{s_{o,i}\mid z_{o,i}(t)=1\}.$

\subsection{Transmission Model}
It is important to note that only the satellites selected to provide service are eligible for BA. 
The set of subcarriers assigned to $s_{o,i}$ is denoted by $\mathcal{B}_{o,i}$. 
The sets fulfill the conditions $\mathcal{B} = \cup_{s_{o,i}\in \tilde{\mathcal{S}}}\mathcal{B}_{o,i}(t)$, and $|\mathcal{B}_{o,i}(t)|={K}/{|{\tilde{\mathcal{S}}^*(t)}|}$. 
We further assume that the frequency bands assigned to different service-providing satellites are mutually exclusive\cite{conference}. In other words, there is no overlap between $\mathcal{B}_{o,i}(t)$ and $\mathcal{B}_{o,i'}(t)$ for $i \neq i'$, i.e., $\mathcal{B}_{o,i}(t)\cap\mathcal{B}_{o',i'}(t) = \emptyset, \forall(o,i)\ne (o',i')|_{z_{o,i}=z_{o',i'}=1}.$

To characterize the connection relationships, we introduce two additional binary variables. 
The UA indicator is denoted as  $x_{o,i,j}(t) \in\{0,1\}$, where $x_{o,i,j}(t)=1$ if  $s_{o,i}$ serves $u_j$ at time slot $t$, and $x_{o,i,j}(t)=0$ otherwise. 
Likewise, the BA indicator is denoted as $y_{o,i,j}^k(t)\in\left\{0,1\right\}$ with $y_{o,i,j}^k(t)=1$ if frequency band $B_k$ is allocated to serve $u_j$, and $y_{o,i,j}^k(t)=0$ otherwise.
Hence, the set of UEs associated with $s_{o,i}$ is $\mathcal{U}_{o,i}(t)=\{u_j\mid x_{o,i,j}(t)=1\}$, while the set of UEs served by $B_k$ is denoted as $\mathcal{U}_{k}(t)=\{u_j\mid y_{o,i,j}^k(t)=1\}$.

The channel coefficient between $u_j$ and $s_{o,i}$ at time $t$ is 
\begin{equation}
H_{o,i,j}(t) = \sqrt{PL_{o,i,j}(t) \cdot SF(t)\cdot G_t}h_{o,i,j}(t),
\end{equation}
where $G_t$ denotes the transmitting antenna gain, $SF(t)$ and $h_{o,i,j}(t)$ represent the shadowing fading and the small scale fading channel coefficient between $s_{o,i}$ and $u_{j}$, respectively.
The path loss $PL_{o,i,j}(t)$ between $s_{o,i}$ and $u_j$ is given by
\begin{equation}
PL_{o,i,j}(t) = C_1\cdot \log_{10}d_{o,i,j}(t)+C_2+C_3\cdot \log_{10}f_c,
\end{equation}
where $d_{o,i,j}(t)$ represents the Euclidean distance between $s_{o,i}$ and $u_j$, and $f_c$ is the carrier frequency.  
The parameters $C_1$, $C_2$, and $C_3$ can be referred from tables provided in {\em QuaDRiGa} \cite{quadriga}.
The small scale fading is calculated as $h_{o,i,j}(t) = e^{j2\pi f_cd_{o,i,j}(t)}$.
The signal-to-interference-plus-noise ratio (SINR) experienced by $u_j$ on the $k\mbox{-th}$ subcarrier from its serving satellite $s_{o,i}$ at time $t$ is
\begin{equation}
\Gamma_{o,i,j}^{k}(t)=x_{o,i,j}(t)\cdot y_{o,i,j}^k(t)\frac{p_{j}(t)|H_{o,i,j}^{k}(t)|^2}{I_{o,i,j}^{k}(t)+\sigma^2},
\end{equation}
where $\sigma^2$ is the noise level,  $p_{j}(t)$ is $u_j$'s transmit power, and 
\begin{equation}\label{expr:Interfernce}
I_{o,i,j}^{k}(t) = \sum\nolimits_{j'\ne j}x_{o,i,j'}(t)y_{o,i,j'}^{k}(t)p_{j'}(t)|H_{o,i,j}^k(t)|^2,
\end{equation}
is the multiple access interference experienced by $u_j$ over $B_k$.
	
Therefore, the sum rate of the network can be written as 
\begin{equation}
R(t)=\sum\nolimits_{j=1}^{J} r_j(t),
\end{equation}
where $r_{j}(t)$ is the data rate of $u_j$ and is given by
\begin{equation}
r_{j}(t)=\sum\nolimits_{o,i}x_{o,i,j}(t)r_{o,i,j}(t),
\end{equation}
where $r_{o,i,j}(t)$ is the data rate of the communication link between $s_{o,i}$ and $u_j$ and is calculated as
\begin{equation}\label{expr:rj}
r_{o,i,j}(t)=\sum\nolimits_{k=1}^{K}y_{o,i,j}^{k}(t)B\cdot\log_2{(1+\Gamma_{o,i,j}^{k}(t))}.
\end{equation}
	
\subsection{Problem Formulation}
As mentioned in Section III.B, our setting explicitly accounts for dynamic visibility induced by each UE's cone-shaped field of view. 
As satellites on different orbital planes and relative phases enter/exit these cones, each per-UE candidate set ${\mathcal S}_j(t), \forall u_j\in{\mathcal U}$, along with their union $\tilde{\mathcal S}(t)$, evolves across time slots. 
Hence, service satellites must be adapted online under constellation size limits and per-satellite power budgets. 
This time-varying feasible region tightly couples satellite selection, UA and PA, and substantially enlarges the combinatorial search space. 
Consequently, our problem includes both dynamic visibility and constellation size constraints, leading to the formulation of an optimization problem that simultaneously addresses satellite selection, UA, and RA, as below

    \begin{align}
\textbf{OP-0}: &\max_{\boldsymbol{x},\boldsymbol{y},\boldsymbol{z},\boldsymbol{p}}  \quad \sum\nolimits_{t\in [0,T]}R(t)\label{obj}\\
\quad \text { s.t. } & \sum\nolimits_{o,i}x_{o,i,j}(t)\cdot z_{o,i}(t) \leq 1, \quad \forall j, \label{constraint_association}\\
& \sum\nolimits_{j=1}^{J}x_{o,i,j}(t)\cdot z_{o,i}(t)\cdot p_j(t) \leq P_{\text{max}}, \quad \forall o,i, \label{constraint_pmax}\\
& \sum\nolimits_{o,i} z_{o,i}(t) \leq Z_{th} \label{constraint_SatNum},\\
& r_{j}(t) \geq r_{\text{min},j}(t), \quad \forall j,\label{constraint_rmin}\\
& p_j(t)>0, \quad \forall j,\label{constraint_pj}\\
& s_{o,i}(t)\in \tilde{\mathcal{S}}(t),\quad \forall (o,i)|_{z_{o,i}=1}, \label{constraint_cone}\\
& x_{o,i,j}(t), y_{o,i,j}^{k}(t), z_{o,i}(t)\in \left\{0,1\right\}, \quad \forall o,i,j,k,\label{constraint_xyzbin}
\end{align}    
where \eqref{constraint_association} enforces that each UE can be associated with at most one satellite.
\eqref{constraint_pmax} ensures that the transmit power of each satellite does not exceed its maximum limit. 
\eqref{constraint_SatNum} represents the constellation size constraint. Therefore the number of the chosen satellites should not be larger than $Z_{th}$. 
\eqref{constraint_rmin} requires that the minimum rate demand of every UE is satisfied. 
\eqref{constraint_pj} specifies that the allocated power to each UE must remain positive.
\eqref{constraint_cone} guarantees that the service satellites are selected from the union set defined by the cone-based visibility rule.

Problem \textbf{OP-0} has integer constraints \eqref{constraint_association}-\eqref{constraint_SatNum} and \eqref{constraint_cone}-\eqref{constraint_xyzbin}, as well as continuous constraints \eqref{constraint_pmax} and \eqref{constraint_rmin}-\eqref{constraint_pj}. The problem  is non-convex and mixed integer programming with respect to $\{\boldsymbol{x},\boldsymbol{y},\boldsymbol{z},\boldsymbol{p}\}$, where ${\boldsymbol x}=[x_{o,i,j}]_{\forall o, i,j}$, ${\boldsymbol y}=[y^k_{o,i,j}]_{\forall o, i,j, k}$ and ${\boldsymbol z}=[z_{o,i}]_{\forall o, i}$, ${\boldsymbol p}=[p_{j}]_{\forall j}$.
In the absence of efficient solvers for the problem, we adopt a tractable alternative. 
Specifically, we will use a problem-specific Markov approximation that exploits the combinatorial structure for sub-optimal solutions with manageable complexity as follows.

\section{The Proposed Solution Framework}\label{Algorithm Design}
To effectively solve problem \textbf{OP-0}, we propose \methodname{} framework which integrates four complementary components into a unified solution pipeline. 
First, Markov approximation is applied to address the combinatorial nature of satellite selection under constellation size constraints and dynamic visibility variations. Then we reformulate the problem into a tractable form with guaranteed convergence to a stationary distribution. 
Next, based on the selected satellite set, a BCD structure is introduced to solve UA, BA, and PA alternatively. Specifically, two matching games are utilized to establish stable UA with BA. Conditioned on the outcome of UA and BA, PA is optimized using KKT conditions.
By combining these components in a sequential yet coherent manner, \methodname{} becomes a scalable and suboptimal framework for the joint satellite selection and joint UA, BA, and PA.
 The detail of each component will be elaborated in the following.

\subsection{Markov Approximation}
Let $\boldsymbol{f}=\{\boldsymbol{x},\boldsymbol{y},\boldsymbol{z}, \boldsymbol{p}\}\in \mathcal{F}$ denote a feasible \leoname{} satellites network configuration with $\mathcal{F}$ as the feasible set generated by the constraints (\ref{constraint_association})-(\ref{constraint_xyzbin}).
The objective function in \eqref{obj} can be reevaluated as $\theta(\boldsymbol{f}) = \sum_{t\in [0,T]}R(t)$ with $\boldsymbol{f}\in\mathcal{F}$. 
Accordingly, the original objective function can be equivalently reformulated as
\begin{equation}\label{eq:discreteproblem}
\textbf{OP-1}: \max\nolimits_{\boldsymbol{f}\in \mathcal{F}}\quad \theta(\boldsymbol{f}).
\end{equation}
$\textbf{OP-1}$ is inherently discrete and combinatorial. To make it tractable, we approximate the problem by reformulating it as a continuous optimization problem through log-sum-exp relaxation and the use of a conjugate function \cite{markov} as

\begin{align}
\textbf{OP-2}: \quad \max_{\boldsymbol{w}}\quad &\sum_{\boldsymbol{f}_i\in \mathcal{F}}w_{\boldsymbol{f}_i}\theta(\boldsymbol{f}_i)-\frac{1}{\beta}\sum_{\boldsymbol{f}_i\in \mathcal{F}}w_{\boldsymbol{f}_i}\log(w_{\boldsymbol{f}_i})\label{P2}\\
\text{s.t.} \quad & \sum\nolimits_{\boldsymbol{f}_i\in \mathcal{F}}w_{\boldsymbol{f}_i} = 1,
\end{align}
where $w_{\boldsymbol{f}_i}$ denotes the weight of a feasible network configuration $\boldsymbol{f}_i$, and $\beta$ is a temperature parameter controlling the difference between the objective function value of $\textbf{OP-1}$ and that of $\textbf{OP-2}$. 
KKT condition for $\textbf{OP-2}$ leads to the optimal weights as
\begin{equation}\label{KKT}
w_{\boldsymbol{f}_i}^* = \frac{e^{\beta\theta(\boldsymbol{f}_i)}}{\sum_{m=1}^{|{\mathcal{F}}|}e^{\beta\theta(\boldsymbol{f}_m))}}.
\end{equation}

However, it is not practical to compute the optimal weights for all feasible network configurations. 
To address this challenge, we regard the feasible configurations as states and the corresponding weight as the stationary probability of a Markov chain. 
The problem then becomes the design of a Markov chain with $\mathcal{F}$ as the state space and the stationary distribution as (\ref{KKT}). We note that there is at least one Markov chain fulfilling our requirements\cite{markov}.
Let  $q(\boldsymbol{f}\rightarrow \boldsymbol{f}')$ and $q(\boldsymbol{f}'\rightarrow \boldsymbol{f})$ denote the transition probabilities. 
The Markov chain should satisfy the balance condition and is expressed as
\begin{equation}
w_{\boldsymbol{f}}q(\boldsymbol{f}\rightarrow \boldsymbol{f}')=w_{\boldsymbol{f}'}q(\boldsymbol{f}'\rightarrow\boldsymbol{f}),\label{eq:equality_of_joint}
\end{equation}
which is derived according to the Bayes rule on a joint probability of $\boldsymbol{f}$ and $\boldsymbol{f}'$. After substituting \eqref{KKT} into \eqref{eq:equality_of_joint} and canceling out the denominators on both sides, we can obtain
\begin{equation}
e^{\beta\theta(\boldsymbol{f})}q(\boldsymbol{f}\rightarrow \boldsymbol{f}') = 	e^{\beta\theta(\boldsymbol{f}')}q(\boldsymbol{f}'\rightarrow \boldsymbol{f}).\label{eq:equality_of_joint2}
\end{equation}
We assume $q(\boldsymbol{f}\rightarrow \boldsymbol{f}')+q(\boldsymbol{f}'\rightarrow \boldsymbol{f})=1$ in Markov approximation\cite{markov}. The transition probabilities can be evaluated as
\begin{align}
q(\boldsymbol{f}\rightarrow \boldsymbol{f}')&=\frac{1}{1+e^{\beta(\theta(\boldsymbol{f})-\theta(\boldsymbol{f}'))}}, \label{eq:prob_trans1}\\
q(\boldsymbol{f}'\rightarrow \boldsymbol{f})&=\frac{1}{1+e^{\beta(\theta(\boldsymbol{f}')-\theta(\boldsymbol{f}))}}.\label{eq:prob_trans2}
\end{align}
\textcolor{black}{With the state space specified and the transition probabilities given in \eqref{eq:prob_trans1}--\eqref{eq:prob_trans2}, we summarize the Markov-approximation procedure in Algorithm \ref{alg:markovapproximation}. The algorithm iteratively alternates between an exploration stage and a consolidation stage. In each iteration, Algorithm~\ref{alg:JUBPA} is invoked to determine UA, BA, and PA based on the currently selected satellites. If the state stabilizes (i.e., remains unchanged across iterations), the exploration probability $\nu^{[c_{\text{o}}]}$ will be gradually reduced. Algorithm~\ref{alg:markovapproximation} terminates once the exploration probability decreases to zero, at which point a stable solution is obtained.}

\begin{algorithm}[t]
		\caption{The \methodname{} Framework}\label{alg:markovapproximation}
		\begin{algorithmic}[1]
			\STATE \textbf{Initialization:} $c_{\text{o}}=0$, $\nu^{[0]}=1$, $\beta_{\text{step}}>0$, $\nu_\text{step}>0$, $\boldsymbol{z}^{[0]}$ drawn at random.
			\WHILE {$\nu^{[c_{\text{o}}]}\ne0$}
			\IF {$\mod(c_{\text{o}},2)=0$}
			\STATE \% Exploration stage
			\STATE $\boldsymbol{z}^{[c_{\text{o}}+1]}=
			\begin{cases}
				\boldsymbol{z}\quad \text{drawn randomly}, &\text{with prob. }\nu^{[c_{\text{o}}]}\\
				\boldsymbol{z}^{[c_{\text{o}}]}, &\text{with prob. }1-\nu^{[c_{\text{o}}]}
			\end{cases}$
			\ELSE
			\STATE \% Consolidation stage
			\STATE Calculate $\alpha=q(\boldsymbol{f}^{[c_{\text{o}}]}\rightarrow \boldsymbol{f}^{[c_{\text{o}}-1]})$ based on \eqref{eq:prob_trans1}.
			\STATE $\boldsymbol{f}^{[c_{\text{o}}+1]}=
			\begin{cases}
				\boldsymbol{f}^{[c_{\text{o}}-1]}, &\text{with prob. }\alpha\\
				\boldsymbol{f}^{[c_{\text{o}}]}, &\text{with prob. }1-\alpha
			\end{cases}$
			\STATE $\beta=\beta+\beta_{\text{step}}$
			\IF {$\boldsymbol{z}^{[c_{\text{o}}+1]} = \boldsymbol{z}^{[c_{\text{o}}]} $}
			\STATE $\nu^{[c_{\text{o}}+1]}=\max\left\{0,\nu^{[c_{\text{o}}]}-\nu_{\text{step}}\right\}$
			\ENDIF 
			\ENDIF
			\STATE $\boldsymbol{f}^{[c_{\text{o}}+1]*}=\left\{\boldsymbol{x}^{[c_{\text{o}}+1]*},\boldsymbol{y}^{[c_{\text{o}}+1]*},\boldsymbol{p}^{[c_{\text{o}}+1]*},\boldsymbol{z}^{[c_{\text{o}}+1]}\right\}$ by running \textbf{\stagename{}} (Algorithm \ref{alg:JUBPA})
			\STATE $c_{\text{o}}=c_{\text{o}}+1$
			\ENDWHILE
			\STATE \textbf{Return} $\boldsymbol{z}^*$,$\boldsymbol{x}^*$,$\boldsymbol{y}^*$,$\boldsymbol{p}^*$
		\end{algorithmic}
	\end{algorithm}
	
	\begin{algorithm}[tb]
		\caption{Algorithm of JUBPA}\label{alg:JUBPA}
		\begin{algorithmic}[1]
			\ENSURE{$\boldsymbol{x}^{[c_{\text{o}}+1]*},\boldsymbol{y}^{[c_{\text{o}}+1]*},\boldsymbol{p}^{[c_{\text{o}}+1]*}$}
			\REQUIRE{$\boldsymbol{z}^{[c_{\text{o}}+1]}$} 

			\FOR{$u_j$ in $\boldsymbol{L}_{\text{UE}}^{\text{init}}$}
			\FOR{$B_k$ in $\boldsymbol{L}_j^{\text{init}}$} 
			\IF{$G^{+}(G_k,u_j)$ is solvable}
			\STATE $\mathcal{U}_k\leftarrow \mathcal{U}_k\cup \left\{u_j\right\}$, Break
			\ELSE
			\STATE Continue
			\ENDIF
			\ENDFOR
			\ENDFOR
			\WHILE {$(\boldsymbol{x},\boldsymbol{y})$ is not stable}
			\STATE \textbf{\textit{UA+BA Phase:}}
			Run \textbf{Algorithm \ref{alg:UABA}} to obtain $\boldsymbol{x}^{[c_{\text{o}}+1,c_{\text{i}}]}$ and $\boldsymbol{y}^{[c_{\text{o}}+1,c_{\text{i}}]}$

			\STATE \textbf{\textit{PA Phase:}}
			Run \textbf{Algorithm \ref{alg:PA}} to obtain $\boldsymbol{p}^{[c_{\text{o}}+1,c_{\text{i}}]}$
			\STATE $c_{\text{i}}=c_{\text{i}}+1$
			\ENDWHILE
			\STATE \textbf{Return} $\boldsymbol{x}^{[c_{\text{o}}+1]*},\boldsymbol{y}^{[c_{\text{o}}+1]*},\boldsymbol{p}^{[c_{\text{o}}+1]*}$
		\end{algorithmic}
	\end{algorithm}
	
Before executing Algorithm \ref{alg:JUBPA}, it is necessary to find a feasibly initialized solution point. For the purpose, we will find a solution point at the boundary line generated by the minimum rate constraint (\ref{constraint_rmin}). Thus, we let the communication rate of each $u_j$ equal to $r_{\text{min},j}$.

\textcolor{black}{After substituting (\refeq{expr:rj}) into the equation $r_{j}(t) = r_{\text{min},j}(t)$ }and omitting the indexes $o,i,t$ for simplicity, $r_{\text{min},j}$ can be evaluated as 
\begin{equation}\label{eq:1}
r_{\text{min},j} = B\cdot \log_2\left[1+\frac{p_j|H_{j}^k|^2}{\sum_{j',y_{j'}^k=1}p_{j'}\cdot |H_{j}^k|^2+\sigma^2}\right], \forall j,
\end{equation}
which is equivalent to 
\begin{equation}\label{expr: rj2}
p_j\cdot \frac{1}{\delta_j}-\sum\nolimits_{j',y_{j'}^k=1}p_{j'}=\epsilon_{j,k}, \quad \forall j,
\end{equation}
where  $\delta_j = 2^{\frac{r_{j,\text{min}}}{B}}-1$, $\epsilon_{j,k} = \frac{\sigma^2}{|H_{j}^k|^2}$. From (\ref{expr: rj2}), for each subcarrier $B_k$ we obtain a system of the first-degree linear equations as
\begin{equation}
G_k=G(\mathcal{U}_k):\mathbf{A}_k\boldsymbol{p}_k=\boldsymbol{\epsilon}_k,\forall k,
\end{equation}
where $\mathcal{U}_k$ denotes the UE set served by $B_k$, $G(\mathcal{U}_k)$ denotes the linear equations system induced by the UE set $\mathcal{U}_k$. The UE power vector $\boldsymbol{p}_k\in \mathbb{R}^{J_k\times 1}$ is defined as $\left[\boldsymbol{p}_k\right]_{j_k} = p_{j_k}$, and $\boldsymbol{\epsilon}_k\in \mathbb{R}^{J_k\times 1}$ is defined as $\left[\boldsymbol{\epsilon}_k\right]_{j_k} = \epsilon_{j_k}^k$. $J_k=\sum_{j_k}y_{j_k}^k$ is the number of UEs served by $B_k$. $\mathbf{A}_k\in \mathbb{R}^{J_k\times J_k}$ is defined as 
	\begin{equation}
		\left[\mathbf{A}_k\right]_{ij}=
		\begin{cases}
			\frac{1}{\delta_j}, & i=j,\\
			-1,& i\ne j.
		\end{cases}
	\end{equation}
    
To model the incremental augmentation of $\mathcal{U}_k$ when $u_j$ is assigned to $B_k$, we define the augmentation operator as
\begin{equation}
	G^{+}(G_k,u_j): \mathbf{A}_{j\rightarrow k}\boldsymbol{p}_{j\rightarrow k}=\boldsymbol{\epsilon}_{j\rightarrow k},
\end{equation}
which maps the previous linear system of $B_k$ and the candidate UE $u_j$ to the updated linear system after augmentation. $\boldsymbol{p}_{j\rightarrow k}\in \mathbb{R}^{(J_k+1)\times 1}$ is defined as $\boldsymbol{p}_{j\rightarrow k} = [\boldsymbol{p}_{k}^{T},p_j]^T$, and $\boldsymbol{\epsilon}_{j\rightarrow k}\in \mathbb{R}^{(J_k+1)\times 1}$ is defined as $\boldsymbol{\epsilon}_{j\rightarrow k} = [\boldsymbol{\epsilon}_k^{T},\epsilon_{j,k}]^T$. The updated matrix $\mathbf{A}_{j\rightarrow k}\in \mathbb{R}^{(J_k+1)\times(J_k+1)}$ is defined as 
	\begin{equation}
		\left[\mathbf{A}_{j\rightarrow k}\right]_{i'j'}=
		\begin{cases}
			\left[\mathbf{A}_k\right]_{i'j'}, & i',j'\leq J_k,\\
			{1}/{\delta_j}, & i'=j'=J_k+1,\\
			-1.& \text{others}
		\end{cases}
	\end{equation}

	The definition of solvable $G^{+}(G_k,u_j)$ is 
	\begin{align}
		G^{+}(G_k,u_j)\text{ is solvable}\Longleftrightarrow \det\left(\mathbf{A}_{j\rightarrow k}\right)&\ne0,\nonumber \\
		\left(\mathbf{A}_{j\rightarrow k}\right)^{-1}\boldsymbol{\epsilon}_{j\rightarrow k} &>\boldsymbol{0}.
	\end{align}
	In the initialization of \stagename{},  the UE with higher $r_{\text{min},j}/\bar{g}_j(t)$ is prioritized to be allocated to the subcarrier with the highest channel gain. \textcolor{black}{This is because the UE with higher $r_{\text{min},j}/\bar{g}_j(t)$ has more stringent minimum rate requirements but experiences low channel gains. As a result, satisfying the constraint on the minimum rate for such UEs is challenging.}

\subsection{Matching Game Theory}
\textcolor{black}{
With a feasible solution from the initialization process, we will optimize  UA and BA with the matching game method,
\begin{align}
    \textbf{OP-UABA}: &\max_{\boldsymbol{x},\boldsymbol{y}}  \quad R(t)\\
    \quad \text { s.t. }& \text{(\ref{constraint_association}),~(\ref{constraint_pmax}),~(\ref{constraint_rmin}),~(\ref{constraint_xyzbin})}.
\end{align}
In the following, we design two matching games corresponding to UA and BA, respectively. Both matching games are designed based on the explored $\boldsymbol{z}^{[c_{\text{o}}+1]}$ and the corresponding satellites set is defined as $\mathcal{S}_{\boldsymbol{z}^{[c_{\text{o}}+1]}}(t)$ which will be denoted as $\mathcal{S}_{\boldsymbol{z}}(t)$ for simplicity in the following. There are two sides of players in each matching game. In the UA matching game, two sides are UEs and satellites, and the corresponding sets are $\mathcal{U}$ and $\mathcal{S}_{\boldsymbol{z}}(t)$. In the BA matching game, two sides are UEs associated with $s_{o,i}$ and subcarriers belonging to $s_{o,i}$, and the corresponding sets are $\mathcal{U}_{o,i}$ and $\mathcal{B}_{o,i}$. The matching function is used to describe the mapping relationship between two sides of players in each game.} The matching function $\mu_{\text{UA}}(\cdot)$ for UA is defined as a function mapping $\mathcal{U}$ to $\mathcal{S}_{\boldsymbol{z}}(t)$ such that $s_{o,i} = \mu_{\text{UA}}(u_j)$ iff $u_j \in \mu^{-1}_{\text{UA}}(s_{o,i})$, in which $\mu^{-1}_{\text{UA}}(s_{o,i})$ is defined as 
\begin{equation}
	\mu^{-1}_{\text{UA}}(s_{o,i};t)=\mathcal{U}_{o,i}=\left\{u_j\mid x_{o,i,j}(t)=1\right\}.
\end{equation}
Likewise, the matching $\mu_{\text{BA}}(\cdot)$ for BA is defined as a function mapping $\mathcal{U}_{o,i}$ to $\mathcal{B}_{o,i}$ such that $B_{k} = \mu_{\text{BA}}(u_j)$ iff $u_j \in \mu^{-1}_{\text{BA}}(B_{k})$, in which $\mu^{-1}_{\text{BA}}(B_k)$ is defined as
\begin{equation}
	\mu^{-1}_{\text{BA}}(B_k;t)=\mathcal{U}_{k}=\{u_j\mid y^k_{o,i,j}(t)=1\}.
\end{equation}

The final matching result of UA and BA is denoted as $\boldsymbol{\mu}=\{{\mu_{\text{UA}},\mu_{\text{BA}}}\}$. After defining basic elements in a matching game, how to design a matching game to obtain a stable matching result is introduced as the following.

The general procedure of a matching game can be summarized as three steps:
\begin{enumerate}
    \item Both sides of players construct the profile lists based on the preference function;
    \item  Each player on the first side proposes the requirement of being associated with the player corresponding to the highest preference function on the second side;
    \item Each player on the second side decides whether to be associated with the first-side player that proposed the association requirement based on the profile list.
\end{enumerate}
In order to apply matching game to solve the problem \textbf{OP-UABA}, we need to design the preference functions that are related to the objective function and constraints in \textbf{OP-UABA}. In the UA matching game, the preference function on the UE side is designed based on the objective function, and the preference function on satellites side is designed based on the constraints. In the BA matching game, the preference function is designed based on objective function. The constraints are handled in a single step where we check whether they are satisfied. The pseudo code is shown in Algorithm \ref{alg:UABA}.

\subsubsection{UA} 
    \begin{algorithm}[tb]
		\caption{Algorithm of UA and BA}\label{alg:UABA}
		\begin{algorithmic}[1]
			\ENSURE{$\boldsymbol{x}^{[c_{\text{o}}+1, c_{\text{i}}]}$,$\boldsymbol{y}^{[c_{\text{o}}+1, c_{\text{i}}]}$}
			\REQUIRE{$\boldsymbol{x}^{[c_{\text{o}}+1, c_{\text{i}}-1]},\boldsymbol{y}^{[c_{\text{o}}+1, c_{\text{i}}-1]},\boldsymbol{p}^{[c_{\text{o}}+1, c_{\text{i}}-1]}$} 
			\STATE \% UA part
            \FOR {each $u_j$}
			\STATE Update $\boldsymbol{L}_{j}^{\text{UA},[c_{\text{in}}]}$ based on  $PF_{j}^{\text{UA}}(o,i;\boldsymbol{\mu}^{[c_{\text{in}}-1]})$
			
			\ENDFOR
			\STATE Generate $\boldsymbol{L}^{\gamma}(t)$ and $\mathcal{U}_{\text{req}}^{\text{UA}}$
			\WHILE {$\mathcal{U}_\text{req}^{\text{UA}} \ne \emptyset$}
			\FOR {$u_j\in \mathcal{U}_\text{req}^{\text{UA}}$}
			\IF{$U_j\leq U_{\text{max}}$ and $\mu_{\text{UA}}(u_j)\ne s_{o^{*},i^{*}}$}
			\STATE $o^{*},i^{*} = \mathop{\arg\max}_{s_{o,i}\in \boldsymbol{L}_{j}^{\text{UA}}}PF^{\text{UA}}_{j}(o,i)$,$b_{j\rightarrow o^{*},i^{*}}= 1$
			\ENDIF
			\ENDFOR
			\FOR {each $s_{o,i}\in \mathcal{S}_{\boldsymbol{z}}$}
			\STATE $U_{o,i}^{\text{req}} = \{u_j\mid b_{j\rightarrow (o,i)}= 1\}$,$p_{o,i}^{\text{req}}=\sum_{u_j\in \mathcal{U}_{o,i}^{\text{req}}}p_j$
		
			\IF {$p_{o,i}+p_{o,i}^{\text{req}}\leq p_{\text{max}}$}
			\STATE $\mathcal{U}_{o,i}\leftarrow \mathcal{U}_{o,i}^{\text{req}}\cup \mathcal{U}_{o,i}$, $p_{o,i}+=p_{o,i}^{\text{req}}$
			\STATE $\mathcal{U}_{o,i}^{\text{req}}\leftarrow \emptyset$, $\mathcal{U}_\text{req}^{\text{UA}}\leftarrow \mathcal{U}_\text{req}^{\text{UA}}\setminus \{u_{j}\mid u_{j}\in \mathcal{U}_{\text{req}}^{\text{UA}}\}$
			
			\ELSE

			\WHILE{$p_{o,i}< p_{\text{max}}$}
			\STATE $j' = \mathop{\arg\max}_{u_j\in \boldsymbol{L}_{o,i}^{\text{UA}}}\sum_{j\in \mathcal{U}_{o,i}}PF^{\text{UA}}_{o,i}(j)$
			\STATE$\mathcal{U}_{o,i}\leftarrow \mathcal{U}_{o,i}\cup \{u_{j'}\}$, $p_{o,i}\leftarrow p_{o,i}+p_{j'}$
			\STATE $\mathcal{U}_\text{req}^{\text{UA}}\setminus \{u_{j'}\}$,$\mathcal{U}_{o,i}^{\text{req}}\setminus \{u_{j'}\}$,$\boldsymbol{L}_{o,i}^{\text{UA}}\setminus \{u_{j'}\}$

			\ENDWHILE
			\ENDIF
			
			\FOR {$u_j \in \mathcal{U}_{o,i}^{\text{rej}}$}
			\STATE $\boldsymbol{L}_{j}^{\text{UA},[c_{\text{in}}]}\leftarrow \boldsymbol{L}_{j}^{\text{UA},[c_{\text{in}}]}\setminus \{s_{o,i}\}$
			\ENDFOR
			\ENDFOR
			
			\ENDWHILE
        \STATE \% BA part
        \FOR {$s_{o,i}\in \mathcal{S}^{\text{UA},[c_{i}]}$}
		
		\STATE The list $\boldsymbol{L}^{\text{UA},[c_i]}_{o,i}=\operatorname{argsort}_{u_j \in \mathcal{U}^{\text{UA},[c_{i}]}_{o,i}}p_j(t)$
		\FOR {$u_j\in \boldsymbol{L}^{\text{UA},[c_i]}_{o,i}$}
		\STATE The sorted list $\boldsymbol{L}^{o,i}_{j}=\operatorname{argsort}_{B_k \in \mathcal{B}_{o,i}} g_j^k(t)$
		\FOR{$B_k \in \boldsymbol{L}^{o,i}_{j}$}
		\IF {constraint (\ref{constraint_rmin}) in $B_{k^{*}}$ is fulfilled}
		\STATE $y_{o,i,j}^k(t)=1$
		\ELSE 
		\STATE $x_{o,i,j}(t)=0, x_{o,i,j}^{[c_i-1]}=1, U_j-=1$
		\ENDIF
		\ENDFOR
		\ENDFOR 
		\ENDFOR
            \FOR {each $s_{o,i}\in \mathcal{S}_{\boldsymbol{z}}$}
			\WHILE {$\max PF^{\text{BA}}(j,k)> 0$}
			\STATE $\mathcal{U}_{\text{req}}^{\text{BA}}\leftarrow \emptyset$,$\mathcal{B}_{\text{block}} \leftarrow \emptyset$
			\FOR {each $B_k\in \mathcal{B}_{o,i}$}
			
			\STATE $\mathcal{U}_{\text{req}}^{\text{BA}}\leftarrow \mathcal{U}_{\text{req}}^{\text{BA}}\cup \left\{\arg\min_{u_j\in \mathcal{U}_k}r_j(t)\right\}$
			\ENDFOR
			
			\WHILE {$\max \boldsymbol{L}^{\text{BA}}\ge 0$}
			\STATE $(j^{*},k^{*}) = \mathop{\arg\max}_{j,k\in \boldsymbol{L}^{\text{BA}}}PF^{\text{BA}}(j,k)$ 
			\IF {$\left\{\mu^{\text{BA}}(j^{*}), B_{k^{*}}\right\}\notin \mathcal{B}_{\text{block}}$}
			\IF {constraint (\ref{constraint_rmin}) in $B_{k^{*}}$ is fulfilled}
			\STATE $\mathcal{U}_{k^{*}}\leftarrow \mathcal{U}_{k^{*}}\cup u_{j^{*}}$,$\mathcal{U}_{\mu^{\text{BA}}(j^{*})}\leftarrow \mathcal{U}_{\mu^{\text{BA}}(j^{*})}\setminus j^{*}$
			\STATE $\mathcal{B}_{\text{block}}\leftarrow \mathcal{B}_{\text{block}}\cup \left\{\mu^{\text{BA}}(j^{*}), B_{k^{*}}\right\}$
			\ENDIF
			\STATE $\boldsymbol{L}^{\text{BA}}\leftarrow \boldsymbol{L}^{\text{BA}}\setminus \left\{(j^{*},k^{*})\right\}$
			\ENDIF
			\ENDWHILE
			\ENDWHILE
			\ENDFOR
		\end{algorithmic}
	\end{algorithm}
    \textcolor{black}{As mentioned above, the preference function on the UE side is designed based on the objective function value. Thus we define the preference function as the potential average increment of each UE's communication rate after changing the associated satellite.} 
    
    The rate of $u_j$ when it is served by $B_k$ is defined as
\begin{align}
&r_{j\to k}(t)= y_j^{k}(t)r_j(t)+\nonumber\\
&(1 - y_j^{k}(t))B\log_2\!\left(1+\frac{p_j(t)\,|H_{j}^{k}(t)|^{2}}{\displaystyle \sum\limits_{j'\in \mathcal{U}_k} p_{j'}(t)\,|H_{j}^{k}(t)|^{2} + \sigma^{2}}\right).
\end{align}
The set of subcarriers belonging to $s_{o,i}$ which can make $r_{j\to k}(t)$ higher than $r_j(t)$ is denoted as
	\begin{equation}
		\mathcal{B}^{+}_{j\rightarrow o,i}=\left\{B_k\mid r_{j\rightarrow k}(t)\geq r_j(t),B_k\in \mathcal{B}_{o,i}\right\}.
	\end{equation}
	The profile function of $u_j$ is defined as
	\begin{equation}
		PF_{j}^{\text{UA}}(o,i;\boldsymbol{\mu},t) = \frac{1}{|\mathcal{B}^{+}_{j\rightarrow o,i}|}\sum_{k\in \mathcal{B}^{+}_{j\rightarrow o,i}}\left(r_{j\rightarrow k}(t)-r_j(t)\right).
	\end{equation}
	Therefore, if the subcarriers of $s_{o,i}$ can make the communication rate of $u_j$ higher after serving $u_j$, $s_{o,i}$ will correspond to a higher preference function value. The comparison operator for $u_j$ at UA stage is defined as
	\begin{equation}\label{eq:UA_comparison}
		\begin{split}
			&(s_{o,i},\mu_{\text{UA}})\succ_{j;t}^{\text{UA}}(s_{o',i'},\mu'_{\text{UA}})\Leftrightarrow \\
			&PF_{j}^{\text{UA}}(o,i;\boldsymbol{\mu},t)\ge PF_{j}^{\text{UA}}(o',i';\boldsymbol{\mu}',t).
		\end{split}
	\end{equation}
	
The profile list of $u_j$ is denoted as $\boldsymbol{L}^{\text{UA}}_{j}(t)$, representing a list of satellites. $\boldsymbol{L}^{\text{UA}}_{j}(t)$ is sorted in descending order according to \eqref{eq:UA_comparison}, i.e.,
   
\begin{equation}
\boldsymbol{L}^{\text{UA}}_{j}(t)=\underset{{s_{o,i}\in \mathcal{S}_{\boldsymbol{z}}}}{\operatorname{argsort}}\quad PF_{j}^{\text{UA}}(o,i),
\end{equation}
The maximum average achievable rate of $u_j$ can be derived as $\gamma_j(t)=\max\nolimits_{s_{o,i}\in \boldsymbol{L}^{\text{UA}}_{j}(t)}PF_{j}^{\text{UA}}(o,i).$

To avoid oscillatory UA updates, we introduce two stabilization mechanisms. The first one is per-iteration quota on UA changes. Because one UE’s association change perturbs the interference of others, allowing UEs to change associations concurrently can induce large network-wide fluctuations and hinder convergence. Hence, in each iteration only a quota-limited subset of UEs is permitted to propose a change. Define this subset of UEs as
\begin{equation}
\boldsymbol{L}^{\gamma}(t)=\underset{{u_j\in \mathcal{U}}}{\operatorname{argsort}}\quad \gamma_j(t).
\end{equation}
where $\gamma_j(t)=\max\nolimits_{s_{o,i}\in \boldsymbol{L}^{\text{UA}}_{j}(t)}PF_{j}^{\text{UA}}(o,i)$. Given a quota $U_q$, only the first $U_q$ UEs in $\boldsymbol{L}^{\gamma}(t)$ are allowed to request a UA update. This set is denoted as $\mathcal{U}_{\text{req}}^{\text{UA}}$ and is defined as
\begin{equation}
\mathcal{U}_{\text{req}}^{\text{UA}}=\{u_{j}\in \boldsymbol{L}^{\gamma}(t)\mid 1\leq j \leq U_q\}.
\end{equation}

The second mechanism is per-UE limit on the number of UA changes. To prevent back-and-forth switching by any single UE, we impose a hard limit on its total number of association updates. Let $U_j$ be the cumulative count of $u_j$ changing UA. A proposal from $u_j$ is accepted only if $U_j < U_{\text{max}}$. $U_{\text{max}}$ is an integer limit.

After UE proposing the request of changing association, satellites will decide whether to accept the requests according to their preference function. To define the preference function on satellites' side, from (\ref{constraint_rmin}), the minimum-rate constraint of $u_j$ on subcarrier $B_k$ can be equivalently written as

\begin{align}
{p_{j}(t)}/{\delta_j}-\epsilon_{j,k}\ge \sum\nolimits_{\substack{j'\ne j,\mu(u_{j'};t)=B_{k}}} p_{j'}(t).\label{UA_cons}
\end{align}
We denote the left-hand term as
\begin{equation}
    a_{j}^{k}(t)={p_{j}(t)}/{\delta_j}-\epsilon_{j,k}
\end{equation}
as the advantage score of $u_j$ on $B_k$. This quantity depends only on the intrinsic parameters of $u_j$: the transmit power $p_j(t)$ and the minimum-rate requirement via $\delta_j$, and the channel gain term $\epsilon_j^k$. A larger $a_j^k(t)$ implies a greater SINR surplus for $u_j$ and therefore makes \eqref{UA_cons} easier to satisfy for any fixed cochannel environment. In particular, $a_{j}^{k}(t)$ increases with higher power and better channel gain, and decreases with a stricter minimum-rate requirement, exactly matching the intuition of an advantage intrinsic to $u_j$.

The right-hand side (RHS) of \eqref{UA_cons}, $\sum_{j'\neq j} p_{j'}(t)$ aggregates the transmit powers of the other UEs sharing $B_k$. A larger RHS tightens the inequality and thus makes UE $u_j$'s minimum-rate constraint harder to satisfy. By symmetry, the transmit power of $u_j$ itself appears in the RHS of the constraints of all other UEs on $B_k$, i.e., $p_j(t)$ increases their cochannel load and degrades their feasibility. Hence $p_j(t)$ plays the role of an external burden on its neighbors and is naturally interpreted as a penalty term from the system perspective. Accordingly, we define the preference function as a weighted difference between these two components, i.e.,
\begin{equation}
PF_{o,i}^{\text{UA}}(j;t) = \varphi_{\text{UA}}\sum_{k\in \mathcal{B}_{o,i}}a_j^k(t)-\sum_{k\in \mathcal{B}_{o,i}}c_{k}\cdot p_{j}(t),
\end{equation}
\textcolor{black}{where $\varphi_{\mathrm{UA}}$ is a scaling coefficient that modulates the numerical contribution of the advantage score, and $c_{k}$ is the cost per unit of the interference caused by $u_j$ \cite{matching}. With this definition, a larger advantage score $a_{j,k}(t)$ increases the preference, whereas a higher cochannel load reduces it.}

The corresponding comparison function for satellites is obtained as
\begin{equation}
\begin{split}
(j,\mu_{\text{UA}})\succ_{o,i;t}^{\text{UA}}(j',\mu'_{\text{UA}})\Leftrightarrow PF_{o,i}^{\text{UA}}(j;t)\ge PF_{o,i}^{\text{UA}}(j';t).
\end{split}
\end{equation}
The profile list of $s_{o,i}$ sorting UEs is denoted as $\boldsymbol{L}^{\text{UA}}_{o,i}(t)$, and can be defined as
\begin{equation}
\boldsymbol{L}^{\text{UA}}_{o,i}(t)=\underset{{u_j\in\mathcal{U}_{o,i}}}{\operatorname{argsort}}\quad PF_{o,i}^{\text{UA}}(j).
\end{equation}
After executing \textcolor{black}{the UA module (Algorithm \ref{alg:UABA}, lines 1–19)}, we can have the set of UEs which change associated satellites as
\begin{equation}
\mathcal{U}^{\text{UA},[c_{i}]}_{o,i}=\big\{u_j\mid x_{o,i,j}^{[c_o+1,c_i]}(t)=1,x_{o,i,j}^{[c_o+1,c_i-1]}(t)=0\big\}.
\end{equation}
Through the definition of $\mathcal{U}^{\text{UA},[c_{i}]}_{o,i}$, we can evaluate the set of satellites which serve more UEs than the previous round as
\begin{equation}
\mathcal{S}^{\text{UA},[c_{i}]}=\{s_{o,i}\mid \mathcal{U}^{\text{UA},[c_{i}]}_{o,i}\neq \emptyset\}.
\end{equation}

\subsubsection{BA}    
\textcolor{black}{Because changing a UE's associated subcarrier will impact two subcarriers' sum rates, i.e., the previously associated subcarrier and the newly associated subcarrier. It is necessary to design the preference function considering impacts on these two subcarriers.} 
\textcolor{black}{
The preference function $PF1^{\text{BA}}(j; \boldsymbol{\mu})$ is introduced to describe the impact on the previously associated subcarrier and is calculated as 
\begin{equation}
    PF1^{\text{BA}}(j; \boldsymbol{\mu}) = R_{\mu_{\text{BA}}(j)}\left(\mu^{-1}_{\text{BA}}(\mu_{\text{BA}}(j))\setminus j\right)-R_{\mu_{\text{BA}}(j)},
\end{equation}
where $\mu^{-1}_{\text{BA}}(k)\setminus j$ denotes the set of UEs in $B_k$ excluding $u_j$. $R_{\mu_{\text{BA}}(j)}\left(\mu^{-1}_{\text{BA}}(\mu_{\text{BA}}(j))\setminus j\right)$ denotes the sum rate of UEs served by the subcarrier $\mu_{\text{BA}}(j)$ excluding $u_j$. It can be calculated through introducing the function $R_{k}\left(\mathcal{U}_x\right)$. The sum rate of UEs in ${\mathcal U}_x$ achieved on the $B_k$ is defined as
\begin{align}
R_{k}\left(\mathcal{U}_x\right)=&\sum\nolimits_{j\in\mathcal{U}_x}B  \notag\\
&\times \log_{2}\bigg[1+\frac{p_j(t)|H_{j}^k(t)|^2}{\displaystyle\sum\nolimits_{\substack{j' \ne j, j' \in \mathcal{U}_x}}p_{j'}(t)|H_{j}^k(t)|^2+\sigma^2}\bigg].
\end{align}
The preference function $PF2^{\text{BA}}(j,k; \boldsymbol{\mu})$ is introduced to describe the impact on the newly associated subcarrier and is calculated as 
\begin{equation}
PF2^{\text{BA}}(j,k; \boldsymbol{\mu}) = R_{k}\left(\mu^{-1}_{\text{BA}}(k)\cup j\right)-R_{k},
\end{equation}
where $\mu^{-1}_{\text{BA}}(k)\cup u_j$ denotes the set of UEs in $B_k$ plus $u_j$.
The profile function that is used to decide the profile list can be defined as the sum of $PF1^{\text{BA}}(j; \boldsymbol{\mu})$ and $PF2^{\text{BA}}(j,k; \boldsymbol{\mu})$
\begin{equation}\label{expr:BAprofileFuc}
PF^{\text{BA}}(j,k;\boldsymbol{\mu}) = PF1^{\text{BA}}(j; \boldsymbol{\mu}) + PF2^{\text{BA}}(j,k; \boldsymbol{\mu}).
\end{equation}}
The sorted profile list can then be constructed based on $PF(j,k;\boldsymbol{\mu})$, and  is denoted as $\boldsymbol{L}_{o,i}^{\text{BA}}(t)$. The BA profile list is constructed as
\begin{equation}\label{expr:BAprofile}
	\boldsymbol{L}_{o,i}^{\text{BA}}(t)=\underset{{u_j\in\mathcal{U}_{o,i},B_k\in\mathcal{B}_{o,i}}}{\operatorname{argsort}}\quad PF^{\text{BA}}(j,k).
\end{equation}
\textcolor{black}{Having detailed the UA and BA matching procedures, we analyze their fundamental properties. In particular, we prove that the matching results are stable and that the objective increases monotonically across iterations, ensuring convergence to a sub-optimal solution. It is formalized in Theorem \ref{theorem:matching}}.
\begin{theorem}\label{theorem:matching}
The matching $\boldsymbol{\mu}=\{{\mu_{\text{UA}},\mu_{\text{BA}}}\}$ is stable and can attain a sub-optimal solution given the transmitted power. 
\end{theorem}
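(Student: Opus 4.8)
The plan is to establish the two assertions of Theorem~\ref{theorem:matching} separately: first that $\boldsymbol{\mu}$ is a stable matching, and second that the JUBPA iterations drive the sum rate monotonically upward so that convergence to a sub-optimal fixed point is guaranteed. For stability, I would argue by contradiction in the classical deferred-acceptance style. Suppose, with power fixed, that $\boldsymbol{\mu}$ is the output of Algorithm~\ref{alg:UABA} but is not stable; then there exists a blocking pair, i.e.\ a UE $u_j$ and a satellite $s_{o,i}$ (or, for the BA game, a subcarrier $B_k$) such that both strictly prefer each other to their current partners under the comparison operators $\succ^{\text{UA}}_{j;t}$, $\succ^{\text{UA}}_{o,i;t}$ (resp.\ the $PF^{\text{BA}}$ ordering). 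The key step is to trace through the proposal loop: since $s_{o,i}$ ranks above $\mu_{\text{UA}}(u_j)$ in $\boldsymbol{L}^{\text{UA}}_j(t)$, at some iteration $u_j$ must have proposed to $s_{o,i}$ before settling on its current match (the argmax selection in line~9 exhausts higher-ranked satellites first, and the per-UE quota $U_{\max}$ only caps the number of changes, not the order of exploration); and since $s_{o,i}$ would have accepted or later displaced a less-preferred UE in favour of $u_j$ whenever the power-budget check $p_{o,i}+p_{o,i}^{\text{req}}\le p_{\max}$ permits, $u_j$ could not have been permanently rejected by a satellite it strictly prefers while that satellite strictly prefers it — contradiction. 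The same bookkeeping argument, applied to the $\mathcal{B}_{\text{block}}$ set and the $\arg\max PF^{\text{BA}}$ selection, rules out blocking pairs in the BA game.

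For the monotonicity-and-convergence half, I would show that each accepted move in JUBPA is non-decreasing in the objective. In the UA phase, a UE changes its associated satellite only if the chosen satellite lies in $\boldsymbol{L}^{\text{UA}}_j(t)$, whose entries are exactly those $s_{o,i}$ with a nonempty $\mathcal{B}^{+}_{j\to o,i}$, and the UE's preference $PF^{\text{UA}}_j(o,i)$ is the average rate increment over that set; hence any realized reassignment (followed by the corresponding BA repair in lines~27–35) weakly increases $r_j(t)$, and I must check that the satellite-side acceptance rule does not admit a move that decreases some other UE's rate below $r_{\text{min},j}$ — this is precisely what the feasibility check on constraint~\eqref{constraint_rmin} and the advantage-score/penalty structure of $PF^{\text{UA}}_{o,i}(j;t)$ enforce. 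In the BA phase, the profile function $PF^{\text{BA}}(j,k;\boldsymbol{\mu})=PF1^{\text{BA}}+PF2^{\text{BA}}$ is by construction the net change in the two affected subcarrier sum rates, so an accepted swap (which requires $PF^{\text{BA}}>0$ and the $\mathcal{B}_{\text{block}}$ anti-cycling guard) strictly increases $R(t)$ while leaving all other subcarriers untouched. Therefore the sequence of objective values produced by the outer loop of Algorithm~\ref{alg:JUBPA} is monotonically non-decreasing; being bounded above (by the per-satellite power budgets $P_{\max}$ through \eqref{constraint_pmax} and the finite bandwidth), it converges, and since the state space of $(\boldsymbol{x},\boldsymbol{y})$ is finite the iteration reaches a fixed point in finitely many rounds — a point from which no UE or subcarrier swap strictly helps, which is exactly a stable matching and a local (sub-)optimum of \textbf{OP-UABA}.

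The main obstacle I anticipate is the monotonicity claim across the \emph{coupled} UA and BA updates rather than stability per se: because a single UE's reassignment perturbs the interference term $I^k_{o,i,j}(t)$ for every cochannel UE, a naive "each player only moves when it improves" argument does not immediately yield a global increase in $R(t)$ — the gain to the moving UE could be offset by losses to its new cochannel neighbours. Closing this gap requires leaning on the specific algorithmic safeguards: the per-iteration quota $U_q$ that serializes changes, the per-UE cap $U_{\max}$, the $\mathcal{B}_{\text{block}}$ set that forbids immediate reversals, and — crucially — the fact that the UA acceptance uses the advantage-score/penalty preference $PF^{\text{UA}}_{o,i}(j;t)$ which internalizes the externality $p_j(t)$ imposes on neighbours. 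I would make this precise by defining a potential equal to the current objective $R(t)$ and verifying, move by move, that the realized update (UA change plus its mandatory BA repair, or a single BA swap) is a potential-improving step; degenerate ties and the possibility of zero-gain moves are handled by the strict inequality $\max PF^{\text{BA}}>0$ in the BA while-loop and by treating the quota-limited UA round as a single aggregate step whose net effect is shown non-negative via the satellite-side feasibility filtering.
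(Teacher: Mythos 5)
Your proposal is considerably more ambitious than the paper's own proof, and the two take genuinely different routes. The paper does not argue stability via blocking pairs at all: its entire proof is a monotonicity argument. It asserts (i) that the acceptance step in the UA stage makes $\sum_j PF_{o,i}^{\text{UA}}(j)$ non-decreasing across iterations, (ii) cites an external reference for the claim that $R(t)$ is non-decreasing in that quantity, (iii) states that $\mu_{\text{BA}}$ "clearly" also increases the throughput, and (iv) chains the two inequalities to get $R(\boldsymbol{x}^{[c_{\text{o}}+1,c_{\text{i}}]},\boldsymbol{y}^{[c_{\text{o}}+1,c_{\text{i}}]};\boldsymbol{p},t)\ge R(\boldsymbol{x}^{[c_{\text{o}}+1,c_{\text{i}}-1]},\boldsymbol{y}^{[c_{\text{o}}+1,c_{\text{i}}-1]};\boldsymbol{p},t)$. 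Your deferred-acceptance contradiction argument for the "no blocking pair" half is therefore additional content the paper omits, and your BA monotonicity argument is actually more self-contained than the paper's: observing that $PF^{\text{BA}}(j,k;\boldsymbol{\mu})=PF1^{\text{BA}}+PF2^{\text{BA}}$ is by construction the exact net change in the two affected subcarriers' sum rates, and that swaps are accepted only when it is positive, gives the increment of $R(t)$ directly rather than by appeal to "clearly." The externality problem you flag — that a UE's reassignment degrades the rates of its new cochannel neighbours, so per-player improvement does not trivially imply global improvement — is precisely the step the paper buries inside its citation to the reference on $R(t)$ versus $\sum_j PF_{o,i}^{\text{UA}}(j)$; you are right that this is the crux, and your proposed potential-function, move-by-move verification is the honest way to close it, though you have outlined rather than completed that verification.

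One caution on the stability half: the classical Gale--Shapley "no blocking pair" argument does not transfer cleanly to Algorithm \ref{alg:UABA}, because the per-UE change cap $U_{\max}$, the per-iteration quota $U_q$, and the $\mathcal{B}_{\text{block}}$ set can prevent a UE from ever proposing to (or returning to) a satellite or subcarrier it strictly prefers. A UE that exhausts $U_{\max}$ may be frozen while a mutually preferred partner remains available, which is a blocking pair in the textbook sense. What the algorithm actually guarantees is stability relative to the moves it permits (two-sided exchange stability under the quota and blocking restrictions), which suffices for termination and for the fixed-point/local-optimality claim, but is weaker than unrestricted pairwise stability. If you pursue the contradiction argument, you should state stability with respect to this restricted deviation set; the paper avoids the issue only by never proving stability explicitly.
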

\begin{proof}
    
    \textcolor{black}{According to the operation at the 14th line in the UA stage of Algorithm \ref{alg:UABA}, the matching $\mu_{\text{UA}}$ can make the quantity $\sum_j PF_{o,i}^{\text{UA}}(j)$ increase after each iteration, i.e., $\sum_j PF_{o,i}^{\text{UA}}(j)^{[r+1]}\ge\sum_j PF_{o,i}^{\text{UA}}(j)^{[r]}$.} It has been proved that $R(t)$ is non-decreasing with $\sum_j PF_{o,i}^{\text{UA}}(j)$\cite{matching}, then we have $R(\boldsymbol{x}^{[r+1]};\boldsymbol{y},\boldsymbol{z},\boldsymbol{p},t)\ge R(\boldsymbol{x}^{[r]};\boldsymbol{y},\boldsymbol{z},\boldsymbol{p},t)$. Clearly, $\mu_{\text{BA}}$ can also guarantee the increment of the network throughput given the UA result and transmitted power. Then we have $R(\boldsymbol{y}^{[r+1]};\boldsymbol{x},\boldsymbol{z},\boldsymbol{p},t)\ge R(\boldsymbol{y}^{[r]};\boldsymbol{x},\boldsymbol{z},\boldsymbol{p},t)$. With the UA matching results, we have $R(\boldsymbol{x}^{[c_{\text{o}}+1, c_{\text{i}}]}$,$\boldsymbol{y}^{[c_{\text{o}}+1, c_{\text{i}}]};\boldsymbol{p},t)\ge R(\boldsymbol{x}^{[c_{\text{o}}+1, c_{\text{i}}-1]}$,$\boldsymbol{y}^{[c_{\text{o}}+1, c_{\text{i}}-1]};\boldsymbol{p},t)$.
\end{proof}
	
\subsection{Lagrangian Dual-Based Optimization for PA}
After UA and BA stages, the remaining variables to be optimized are the UE transmit power $\boldsymbol{p}$. Moreover, for both the objective function and the constraints, the optimization problem \textbf{OP-0} can be reformulated as the form of the additive sum over satellites as
\begin{align}
	\textbf{OP-PA0}: &\max_{\boldsymbol{p}}  \quad \sum\nolimits_{(o,i)}\sum\nolimits_{j\in \mathcal{U}_{o,i}} \log_2{(1+\Gamma_{j}^{k})} \label{obj:rate2}\\
	\quad  \text {s.t. }& \sum\nolimits_{j\in \mathcal{U}_{o,i}}p_j(t) \leq P_{\text{max}}, \quad \forall o,i, \label{constraint_pmax2}\\
    & r_{j}(t) \geq r_{\text{min},j}(t), \quad \forall j\in \mathcal{U}_{o,i}, \forall o,i,\label{constraint_rmin2}\\
    & p_j(t)>0, \quad \forall j\in \mathcal{U}_{o,i}, \label{constraint_pj2}\forall o,i.
\end{align}
Because both the objective function (\ref{obj:rate2}) and the feasible set decided by constraints (\ref{constraint_pmax2})-(\ref{constraint_pj2}) decompose across satellites, the problem \textbf{OP-PA0} can be separated into $|S_{\boldsymbol{z}}|$ sub-problems corresponding to $|S_{\boldsymbol{z}}|$ satellites. The sub-problem corresponding to $s_{o,i}$ is denoted as $\textbf{OP-PA0}_{s_{o,i}}$ and formulated as
\begin{align}
	\textbf{OP-PA0}_{s_{o,i}}: &\max_{\boldsymbol{p}}  \quad \sum\nolimits_{j\in \mathcal{U}_{o,i}} \log_2{(1+\Gamma_{j}^{k})}\label{PAobj}\\
	\quad \text { s.t. }& \sum\nolimits_{j\in \mathcal{U}_{o,i}} p_j \leq P_{\text{max}}, \label{PAconstraint_pmax}\\
	& r_{j} \geq r_{\text{min},j}, \quad \forall j\in \mathcal{U}_{o,i},\label{PAconstraint_rmin}\\
	& p_j \ge 0, \quad \forall j\in \mathcal{U}_{o,i}.
\end{align}
Then we can further transform this problem through geometric programming. Let $\log_2{(1+\Gamma_{j}^{k}(t))}\simeq \log_2{(\Gamma_{j}^{k}(t))}$, $\acute{p}_j=\log_2 p_j$, and $\acute{I}_j^k=\log_2 I_j^k$, then $\textbf{OP-PA0}_{s_{o,i}}$ is transformed to $\textbf{OP-PA1}_{s_{o,i}}$ as
	\begin{align}
		\textbf{OP-PA1}_{s_{o,i}}: &\min_{\acute{\boldsymbol{p}},\acute{\boldsymbol{I}}}  \quad \sum\nolimits_{j\in \mathcal{U}_{o,i}} \log_2{\bigg[\frac{2^{-\acute{p}_j}}{|H_j^k|^2}(2^{\acute{I}_j^k}+\sigma^2)\bigg]}\label{PAobj2}\\
		\quad \text { s.t. }& \sum\nolimits_{j\in \mathcal{U}_{o,i}} 2^{\acute{p}_{j}} \leq P_{\text{max}}, \label{PAconstraint_pmax2}\\
		&\log_2{\bigg[\frac{2^{-\acute{p}_j}(2^{\acute{I}_j^k}+\sigma^2)}{|H_{j}^{k}(t)|^2}\bigg]}+\chi_j\leq 0, \quad \forall j\in \mathcal{U}_{o,i},\label{PAconstraint_rmin2}\\
		& -\acute{I}_j^k+\log_2 I_j^k = 0, \quad \forall j\in \mathcal{U}_{o,i}\label{Cons_pI},
	\end{align}
	where $\chi_j=\frac{r_{\text{min},j}}{B}$, $\forall j\in \mathcal{U}_{o,i}$. 
	\textcolor{black}{By invoking the KKT framework and introducing the multipliers $\lambda_{o,i}$, $\boldsymbol{n}$, and $\boldsymbol{m}$ for the constraints (\ref{constraint_pmax2}), (\ref{constraint_rmin2}), and (\ref{constraint_pj2}), respectively, $\textbf{OP-PA1}_{s_{o,i}}$ can be transformed to Lagrangian saddle-point problem $\textbf{OP-PA2}_{s_{o,i}}$, i.e.,}
	\begin{align}
		\textbf{OP-PA2}_{s_{o,i}}: &\min_{\acute{\boldsymbol{p}},\acute{\boldsymbol{I}}}\max_{\lambda_{o,i},\boldsymbol{n},\boldsymbol{m}}\quad L_{o,i}\\
		\quad \text{ s.t. } & \lambda_{o,i}\ge 0,\boldsymbol{n}\ge \boldsymbol{0}, \boldsymbol{m}\ge \boldsymbol{0},
	\end{align}
	where Lagrange term $L_{o,i}$ can be calculated as
    \begin{equation}
		\begin{aligned}
			L_{o,i}=&\underset{j\in \mathcal{U}_{o,i}}{\sum} \log_2{\bigg[\frac{2^{-\acute{p}_j}}{|H_{j}^k|^2}(2^{\acute{I}_j^k}+\sigma^2)\bigg]}\\
			&+\lambda_{o,i}(\sum\nolimits_{j\in \mathcal{U}_{o,i}} 2^{\acute{p}_j}-P_{\max})\\
			&+\underset{j\in \mathcal{U}_{o,i}}{\sum} n_j\bigg(\log_2\bigg[\frac{2^{-\acute{p}_j}}{|H_{j}^k|^2}(2^{\acute{I}_j^k}+\sigma^2)\bigg]+\chi_j\bigg)\\
			&+\sum\nolimits_{j\in \mathcal{U}_{o,i}} m_j(-\acute{I}_j^k+\log_2 I_j^k).
		\end{aligned}
	\end{equation}
	
To solve $\textbf{OP-PA2}_{s_{o,i}}$, we apply the duality function evaluated as $D_{o,i}(\lambda_{o,i},\boldsymbol{n},\boldsymbol{m})=\min_{\acute{\boldsymbol{p}},\acute{\boldsymbol{I}}} L_{o,i}.$
	
Then the duality problem can be defined as 
\begin{align}
	\textbf{OP-PA3}_{s_{o,i}}: & \max_{\lambda_{o,i},\boldsymbol{n},\boldsymbol{m}}\quad D_{o,i}(\lambda_{o,i},\boldsymbol{n},\boldsymbol{m})\\
	\quad \text { s.t. }& \lambda_{o,i}\ge 0,\boldsymbol{n}\ge \boldsymbol{0}, \boldsymbol{m}\ge \boldsymbol{0}.
\end{align}
Since the strong duality holds here \cite{matching}, $\textbf{OP-PA3}_{s_{o,i}}$ is equivalent to $\textbf{OP-PA2}_{s_{o,i}}$ and the optimal primal and dual variables $\acute{\boldsymbol{p}}$ and $\acute{\boldsymbol{I}}$ are characterized by the saddle point of the Lagrangian. Fixing the multipliers $\lambda_{o,i}$, $\boldsymbol{n}$, and $\boldsymbol{m}$, we apply the KKT stationary conditions by differentiating $L_{o,i}$ w.r.t. $\acute{\boldsymbol{p}}$ and $\acute{\boldsymbol{I}}$ and setting the results to zero as
    
	\begin{align}
		{\partial L_{o,i}}/{\partial \acute{p}_j}&=-(1+n_j)+\lambda_{o,i} 2^{\acute{p}_j}\cdot \ln2=0,\\
		{\partial L_{o,i}}/{\partial \acute{I}^k_j}&=(1+n_j){2^{\acute{I}_j^k}}/({2^{\acute{I}_j^k}+\sigma^2})-m_j=0.
	\end{align}
	Then the optimal $p_j$ and $I_j^k$ are evaluated as
	\begin{align}
		p_j^{*}&=2^{\acute{p}_j^{*}}={(1+n_j)}/{(\lambda_{o,i}\ln2)},\label{pj}\\
		I_j^{k*}&=2^{\acute{I}_j^{k*}}=\max\{{m_j\sigma^2}/{(1+n_j-m_j)},  0\}.\label{Ij}
	\end{align}
	
	\begin{algorithm}[tb]
		\caption{Algorithm of PA}\label{alg:PA}
		\begin{algorithmic}[1]
			\ENSURE{$\boldsymbol{p}^{[c_{\text{o}}+1, c_{\text{i}}]}$}
			\REQUIRE{$\boldsymbol{x}^{[c_{\text{o}}+1, c_{\text{i}}]},\boldsymbol{y}^{[c_{\text{o}}+1, c_{\text{i}}]},\boldsymbol{p}^{[c_{\text{o}}+1, c_{\text{i}}-1]}$}
			\STATE \textbf{Initialization:} $\boldsymbol{\lambda}^{(0)},\boldsymbol{n}^{(0)},\boldsymbol{m}^{(0)},\boldsymbol{p}^{(0)},\boldsymbol{I}^{(0)}$
			\FOR {each $s_{o,i}\in \mathcal{S}_{\boldsymbol{z}}$}
			\STATE $r=0$
			\WHILE {$\acute{\boldsymbol{p}}$ does not converge}
			\IF {r == 0}
			\STATE Update $\lambda_{o,i}^{(1)}$, $n_j^{(1)}$ and $m_j^{(1)}$ according to (\ref{lamdaj}), (\ref{nj}) and (\ref{mj}) based on the values of $\boldsymbol{p}^{(0)},\boldsymbol{I}^{(0)}$
			\ELSE
			\STATE Update $\acute{p}_j^{(r)*}$ and $\acute{I}_j^{k,(r)*}$ based on (\ref{pj}) and (\ref{Ij})
			\STATE Calculate the value of $I_j^{k,(r)}$ based on the value of $\acute{\boldsymbol{p}}^{(r)*}$ according to (\ref{expr:Interfernce})
			\STATE Update $\lambda_{o,i}^{(r+1)}$, $n_j^{(r+1)}$ and $m_j^{(r+1)}$ according to (\ref{lamdaj}), (\ref{nj}) and (\ref{mj}) based on the values of $\acute{\boldsymbol{p}}^{(r)*}$, $I_j^{k,(r)}$ and $\acute{I}_j^{k,(r)*}$
			\ENDIF
			\STATE $r=r+1$
			\ENDWHILE
			\ENDFOR
		\end{algorithmic}
	\end{algorithm}

Based on the optimal $p_j^{*}$ and $I_j^{k*}$, Lagrange multipliers $\lambda_{o,i}$, $n_j$ and $m_j$ can be updated recursively with step sizes $q_{\lambda}^{(r)}$, $ q_{n}^{(r)}$ and $ q_{m}^{(r)}$ according to the subgradient method as
	\begin{align}
		\lambda_{o,i}^{(r)}&=\lambda_{o,i}^{(r-1)} + q_{\lambda}^{(r)}\cdot g_{\lambda_{o,i}}^{(r-1)}\label{lamdaj},\\
		n_j^{(r)}&=n_j^{(r-1)} + q_{n_j}^{(r)}\cdot g_{n_j}^{(r-1)}\label{nj},\\
		m_j^{(r)}&=m_j^{(r-1)} + q_{m_j}^{(r)}\cdot g_{m_j}^{(r-1)}\label{mj},
	\end{align}
	where the gradients are calculated as
	\begin{align}
		g_{\lambda_{o,i}}^{(r-1)}&=\sum\nolimits_{j\in \mathcal{U}_{o,i}} 2^{\acute{p}^{(r-1)*}_j}-P_{\max},\\
		g_{n_j}^{(r-1)}&=\log_2\bigg[\frac{2^{-\acute{p}^{(r-1)*}_j}}{|H_j^k|^2}(2^{\acute{I}^{k,(r-1)*}_j}+\sigma^2)\bigg]+\chi_j\label{vnj},\\
		g_{m_j}^{(r-1)}&=\log_{2}\bigg[I_j^{k,(r-1)}(\acute{\boldsymbol{p}}^{(r-1)*})\bigg]-\acute{I}_j^{k,(r-1)*},\label{vmj}
	\end{align}
    where in (\ref{vmj}), $I_j^{k,(r-1)}(\acute{\boldsymbol{p}}^{(r-1)*})$ is evaluated via (\ref{expr:Interfernce}) after obtaining the transmit power $\acute{\boldsymbol{p}}^{(r-1)*}$, i.e.,
    \begin{equation}
        I_j^{k,(r-1)}=I_j^{k,(r-1)}(\acute{\boldsymbol{p}}^{(r-1)*})=\sum_{j'\ne j}p_{j'}^{(r-1)*}(t)|H_{j}^k(t)|^2.
    \end{equation}
	Accordingly, we explicitly write $I_j^{k,(r-1)}$ as $I_j^{k,(r-1)}(\acute{\boldsymbol{p}}^{(r-1)*})$ to emphasize its functional dependence on the power vector. 
    
    Since every part of JUBPA has been illustrated, we will analyze the convergence of JUBPA by proposing the following Theorem 2.
\begin{theorem}\label{theorem:converge}
    Algorithm \ref{alg:JUBPA} can converge and achieve a sub-optimal solution.
\end{theorem}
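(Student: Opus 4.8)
The plan is to read Algorithm~\ref{alg:JUBPA} as a two-block coordinate ascent, with blocks $(\boldsymbol{x},\boldsymbol{y})$ (solved by the UA$+$BA phase) and $\boldsymbol{p}$ (solved by the PA phase), and to establish convergence in three steps: (i) the objective $R(t)$ evaluated along the iterates is monotonically non-decreasing; (ii) $R(t)$ is bounded above, so the objective sequence converges; (iii) the discrete block freezes after finitely many outer iterations, after which the inner PA recursion converges, and the limit is a fixed point of the alternating scheme, hence a sub-optimal solution of \textbf{OP-0}.

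First I would invoke Theorem~\ref{theorem:matching}: for a fixed power vector one pass of the UA$+$BA phase yields a stable matching and satisfies $R(\boldsymbol{x}^{[c_{\text{o}}+1,c_{\text{i}}]},\boldsymbol{y}^{[c_{\text{o}}+1,c_{\text{i}}]};\boldsymbol{p},t)\ge R(\boldsymbol{x}^{[c_{\text{o}}+1,c_{\text{i}}-1]},\boldsymbol{y}^{[c_{\text{o}}+1,c_{\text{i}}-1]};\boldsymbol{p},t)$. Then, holding $(\boldsymbol{x},\boldsymbol{y})$ fixed, I would argue that each per-satellite power subproblem $\textbf{OP-PA1}_{s_{o,i}}$ is a convex (geometric) program for which strong duality has already been asserted, so the subgradient updates \eqref{lamdaj}, \eqref{nj}, \eqref{mj} with diminishing step sizes drive the multipliers to a dual optimum and the recovered $p_j^{*}$ of \eqref{pj} is optimal for that subproblem; in particular the PA phase cannot decrease $R(t)$. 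Chaining the two inequalities makes the objective non-decreasing over the outer loop of Algorithm~\ref{alg:JUBPA}.

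Next I would establish the upper bound: under the per-satellite budget \eqref{PAconstraint_pmax} and the finite total bandwidth $K\cdot B$, every $\Gamma_{o,i,j}^{k}$ is at most $P_{\max}\,\max_{o,i,j,k}|H_{o,i,j}^{k}|^{2}/\sigma^{2}$, so $R(t)\le K\,B\log_{2}\!\big(1+P_{\max}\max|H|^{2}/\sigma^{2}\big)$. A monotone sequence bounded above converges, which gives convergence of the objective values. To obtain convergence of the iterates, I would use that the UA/BA configuration lives in a finite set and that, by Theorem~\ref{theorem:matching} together with the per-iteration quota $U_{q}$ and the per-UE cap $U_{\max}$ on association changes, only finitely many association/band moves can occur before no profitable swap remains; at that point the ``$(\boldsymbol{x},\boldsymbol{y})$ is not stable'' loop condition fails, the outer loop exits, and what is left is a fixed convex PA problem that Algorithm~\ref{alg:PA} solves to its optimum. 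The limit point is then simultaneously a stable matching in $(\boldsymbol{x},\boldsymbol{y})$ and a KKT point in $\boldsymbol{p}$, i.e., a coordinate-wise optimum; since \textbf{OP-0} is non-convex and mixed-integer, this is only sub-optimal in general, which is exactly the claim.

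The main obstacle I anticipate is ruling out oscillation in the coupled discrete--continuous iteration: because a single UA change perturbs the co-channel interference of other UEs, a naive ``every swap strictly increases $R(t)$ on a finite space'' argument could break if the intervening PA step reshapes the objective. I would handle this by leaning on the monotonicity already guaranteed by Theorem~\ref{theorem:matching} (whose interference-aware preference functions and quota/cap mechanisms are designed precisely for this) so that the composite outer iteration stays monotone in $R(t)$, and by using the hard cap $U_{\max}$ to bound the total number of discrete moves, forcing the $(\boldsymbol{x},\boldsymbol{y})$ block to freeze in finitely many rounds. A secondary subtlety is that the PA phase optimizes the high-SINR surrogate $\log_{2}\Gamma_{j}^{k}$ rather than $\log_{2}(1+\Gamma_{j}^{k})$; I would either absorb this into the ``sub-optimal'' qualifier or note that the surrogate gap is uniformly small in the operating SINR regime.
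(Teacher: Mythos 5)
Your proof follows essentially the same route as the paper's: a two-block coordinate-ascent monotonicity argument, invoking Theorem~\ref{theorem:matching} for the UA$+$BA block and the non-decrease of the PA step for the power block. The paper's own proof consists only of stating these two monotonicity inequalities, whereas you additionally supply the upper bound on $R(t)$ and the finite-termination argument for the discrete block via the $U_q$/$U_{\max}$ mechanisms --- refinements of, not departures from, the paper's argument, and ones that are in fact needed to pass from monotonicity to convergence.
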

\begin{proof}
 To show the convergence of Algorithm \ref{alg:JUBPA}, we need to prove the monotonicity of network rate function $R(\cdot)$ along iteration steps, which is the function of $\boldsymbol{x}$, $\boldsymbol{y}$, and $\boldsymbol{z}$. The first inequality corresponding to PA stage is
 \begin{align}
     &R(\poic)|_{[\xoic,\yoic]} \ge \nonumber \\
     &R(\poici)|_{[\xoic,\yoic]},
 \end{align}
where $R(\bigstar)|_{[\blacktriangle]}$ represents the network rate function w.r.t $\bigstar$ under fixed parameters $\blacktriangle$. The inequality holds because the PA part is designed based on the gradient type of method. Thus the non-decreasing monotonicity can be achieved with respect to $\boldsymbol{p}$. The second inequality corresponding to UA and BA stages is expressed as 
\begin{align}
     &R(\xoic,\yoic)|_{\left\{\poici\right\}} \ge \nonumber\\
     &R(\xoici,\yoici)|_{\left\{\poici\right\}},
 \end{align}

which has been proved in Theorem \ref{theorem:matching}.
\end{proof}

\subsection{Complexity Analysis}
As illustrated in Algorithm \ref{alg:markovapproximation}, the proposed \methodname{} algorithm is composed of an inner \stagename{} routine and an outer Markov approximation procedure.
Given a selected satellite subset, \stagename{} alternates between UA\&BA and PA. In UA/BA, UEs propose the requirements of changing associated satellites and subcarriers following preference functions until a stable matching is reached. UA\&BA has the complexity of ${\mathcal O}(Z_{\text{th}}J+JK/Z_{\text{th}})$. 
Subsequent PA solves the power allocation subproblem via dual decomposition. In PA, each satellite updates its corresponding multipliers and power variables. This leads to the complexity of ${\mathcal O}(R_{\text{PA}}Z_{\text{th}})$ where $R_{\text{PA}}$ is the number of iterations. Therefore, \stagename{} has the complexity of ${\mathcal O}(Z_{\text{th}}J+JK/Z_{\text{th}}+R_{\text{PA}}Z_{\text{th}})$, and Algorithm  \ref{alg:markovapproximation} will invoke JUBPA $C_0$  times with the complexity ${\mathcal O}(C_0[Z_{\text{th}}J+JK/Z_{\text{th}}+R_{\text{PA}}Z_{\text{th}}])$.
\textcolor{black}{In each outer iteration, we have the current satellite subset $\mathcal{S}1$ with its objective value $R1$ computed by the inner JUBPA. A candidate subset $\mathcal{S}2$ is generated by exploration, and its objective $R2$ is obtained by one additional call to JUBPA. The Markov update then only computes the performance difference $\Delta R = R2 - R1$, evaluates the transition probability via (\ref{eq:prob_trans1}), draws a uniform random variable, and selects either $\mathcal{S}1$ or $\mathcal{S}2$ accordingly. The complexity of these operations is $\mathcal{O}(1)$. Thus the per-iteration cost is dominated by the inner JUBPA evaluation.}

\section{Simulation Results}\label{Simulation Results}
	\begin{figure}[t]
		\centering
		\subfigure[]{
			\includegraphics[width=4.2cm,height=3.7cm]{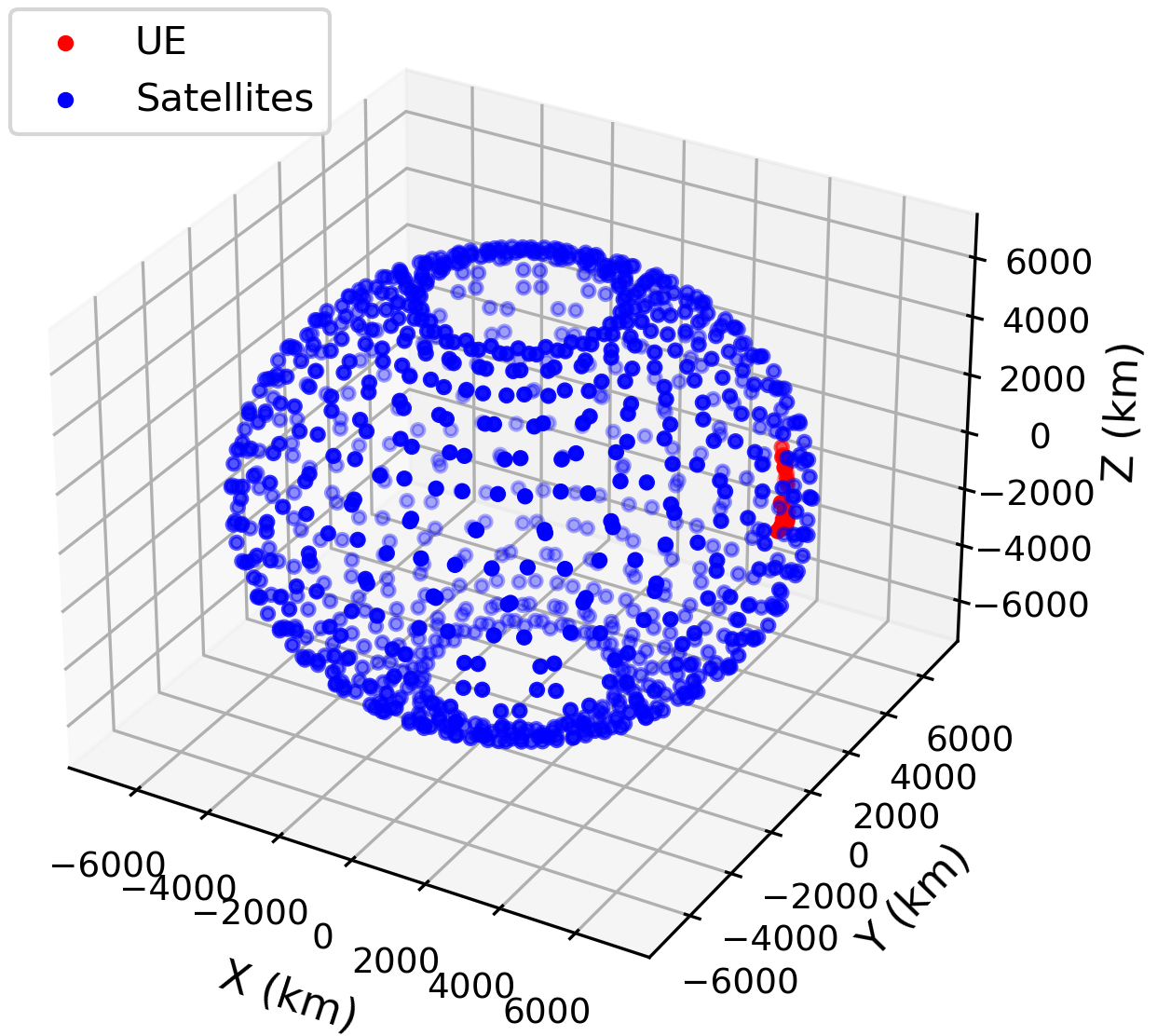}
		}
\hspace{-1em}
		\subfigure[]{
			\includegraphics[width=4.2cm,height=3.7cm]{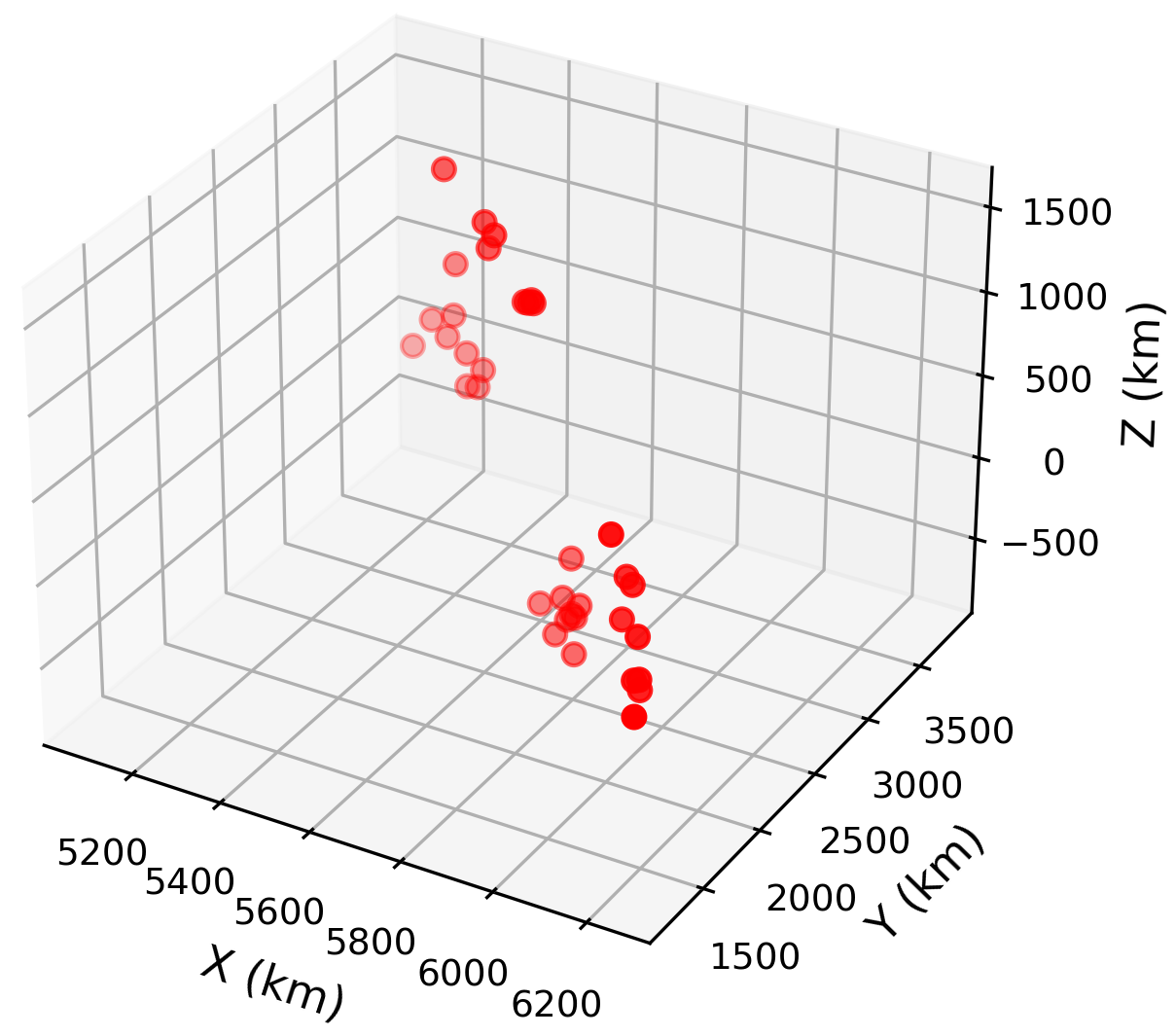}
		}
		\caption{Spatial distributions of (a) satellites with UEs and (b) UEs.}
		\label{fig:sat_coordinates}
	\end{figure}

	\begin{table}[t]
		\centering
		
		\caption{General System Parameters}\label{tab:simulationpara}
		\begin{tabular}{|p{0.2\textwidth}|c|} 
			\hline
			Parameters & Description\\
			\hline
			Number of satellites, $\abs{\mathcal{S}}$ & $25\times 40$ \\
			\hline
			Number of UEs, $\abs{\mathcal{U}}$  & $30$\\
			\hline
			Number of subcarriers, $\abs{\mathcal{B}}$  & $25$\\
			\hline
			Central frequency, $f_c$  & $6$ GHz\\
			\hline
			Bandwidth, $B$  & $10$ MHz\\
			\hline
			Orbits height, $h_{\text{sat}}$  & $550$ km\\
			\hline
			Orbits inclination  & $70\degree$\\
			\hline
			Maximum power, $P_{\text{max}}$ & 5 W\\
			\hline
			Satellites amount, $Z_{\text{th}}$ & 10\\
			\hline
			Shadowing factor, $SF$ & 1 dB \\
			\hline
			Antenna gain, $G_{t}$ & 30 dB\\
			\hline
			Cone angle, $\varphi$ (rad) & $5\pi/12$ \\
			\hline
			UEs area center, $\boldsymbol{a}_{\text{UEC}}$ (km) & $[5765.39, 2717.12, 252.39]$ \\
			\hline
		\end{tabular}
	\end{table}

\subsection{\leoname{} System Settings}
To evaluate \methodname{} in a more practical setting, we consider a {\em Walker Constellation Design Pattern } for an \leoname{} satellite communication system.
The constellation consists of $1{,}000$ satellites deployed at different orbits with the same altitude of $550\mbox{ km}$ and a common inclination of $70\degree$ but different right ascension of the ascending nodes.
The communication system operates at a carrier frequency of $f_c=6\mbox{ GHz}$, and each subcarrier occupies a bandwidth of $10\mbox{ MHz}$. 
$25$ UEs are uniformly distributed within a circular area of radius $1{,}500\mbox{ km}$, centered at $\boldsymbol{a}_{\text{UEC}}$.
The coordinates of the satellites constellation and UEs are shown in Fig.~\ref{fig:sat_coordinates}. 
Throughout the simulations, unless otherwise specified, we adopt the parameters reported in Table~\ref{tab:simulationpara} (see \cite{nasa_smallsat_link_2019} and references therein).

\subsection{\methodname{} Simulations}
\subsubsection{\methodname{} Convergence}
In Fig.\ref{fig:Markov_converge}, we can see that \methodname{}, implemented via Algorithm \ref{alg:markovapproximation}, exhibits a consistent convergence behavior under different shadowing conditions ($SF = 1, 2, \text{ and } 3\text{~dB}$).
As expected, better channel quality (smaller shadowing factor) leads to higher achievable sum rates. 
Under severe fading scenarios (e.g., $SF=3$ dB), it becomes more challenging to identify satellite groups that jointly satisfy the system constraints and yield high data rates. Consequently, the algorithm stabilizes more quickly, leading to an earlier decay of the exploration probability. Conversely, under milder fading conditions (e.g., $SF=1$ dB), \methodname{} continues exploring larger candidate sets, resulting in better final performance but requiring more iterations to converge.
Occasional drops may still occur even after the sum rate reaches high values. \textcolor{black}{This behavior arises from the Markov approximation. It happens because the algorithm continuously explores alternative satellite constellations to achieve marginal performance gains even after achieving a high objective function value. As a result, satellite groups corresponding to lower function values can still be found and cause fluctuation.} Fig.~\ref{fig:Markov_CDF} further illustrates this behavior by showing the cumulative distribution function (CDF) of the achieved sum rates. The CDF curves show that smaller shadowing factors yield a clear rightward shift, indicating consistently higher rates across UEs. This validates not only the convergence of \methodname{} but also its robustness in adapting to different channel conditions.
\begin{figure}[!t]
	\centerline{\includegraphics[width=0.4\textwidth]{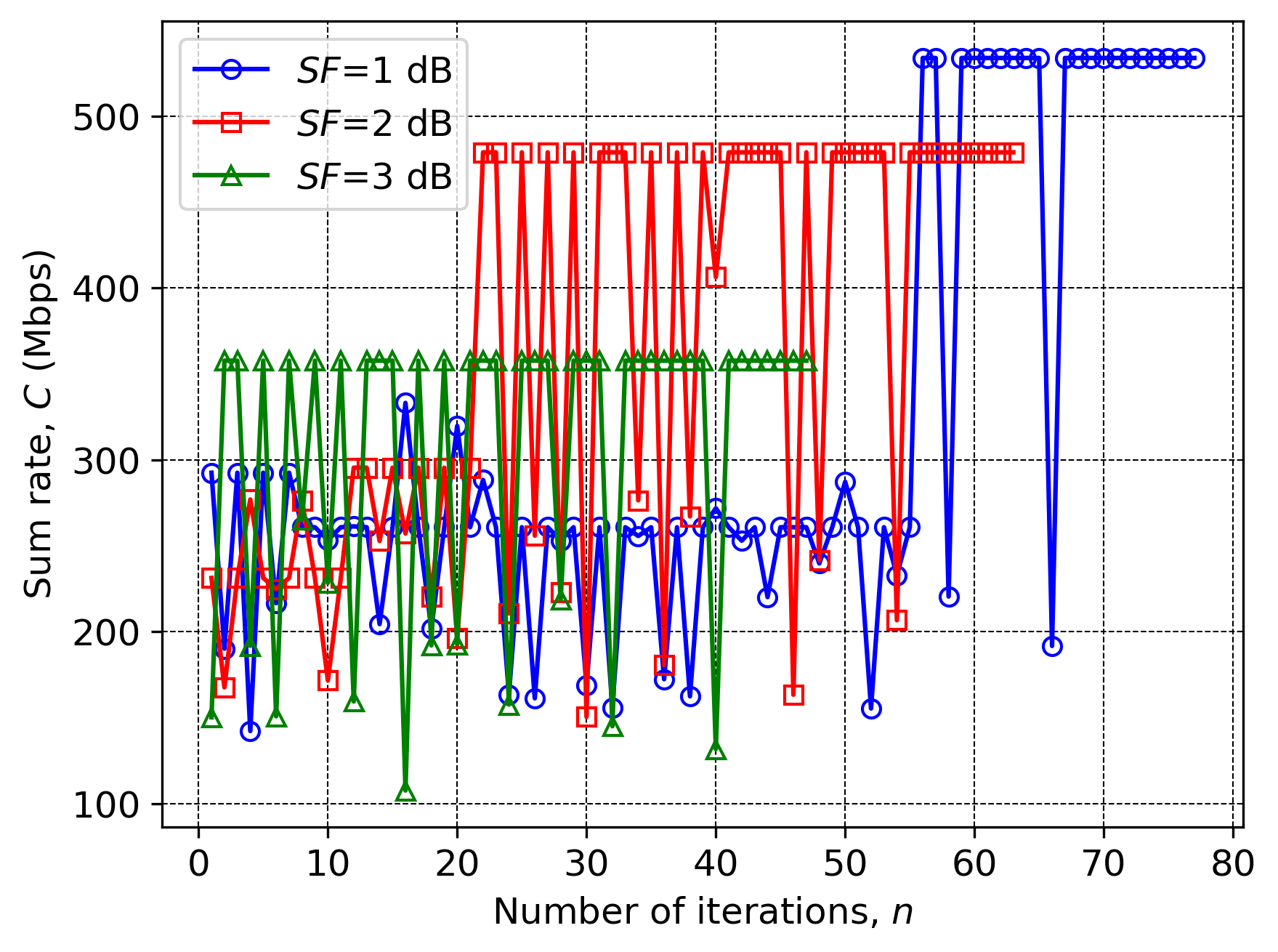}}
	\caption{Convergence of \methodname{} with $P_{\max}=50\mbox{ W}$.}
	\label{fig:Markov_converge}
\end{figure}
\begin{figure}[!t]
	\centerline{\includegraphics[width=0.4\textwidth]{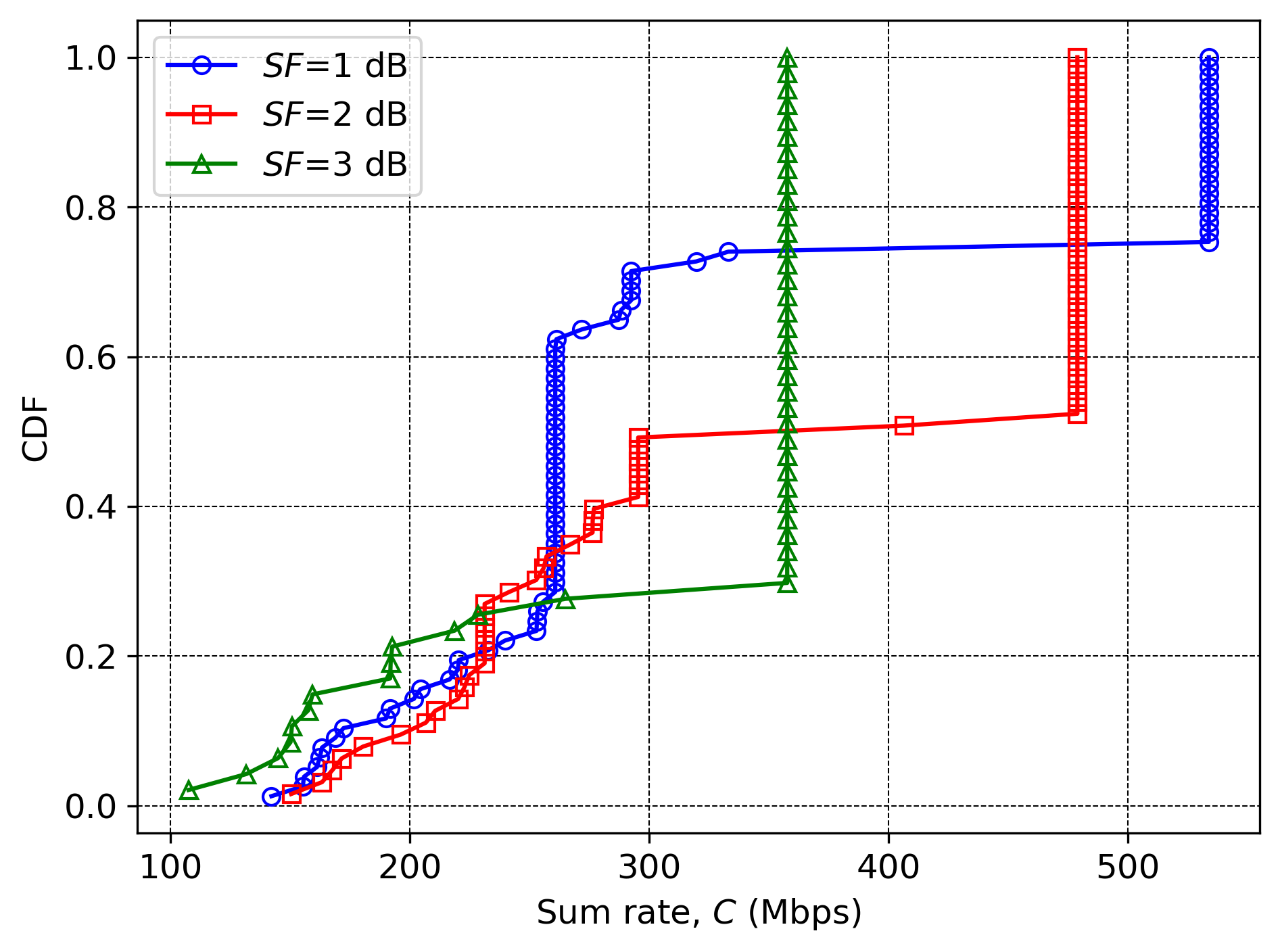}}
	\caption{\methodname{}'s CDF curve with $P_{\max}=50\mbox{ W}$.}
	\label{fig:Markov_CDF}
\end{figure}
\subsubsection{Performance of Matching-game based UA and BA in \methodname{}}
Fig.~\ref{fig:UABA_converge} illustrates the convergence behavior of the matching-based UA and BA (Algorithm \ref{alg:UABA}) procedures with a fixed PA policy. The power of each UE is obtained by the initialization method of JUBPA.
The results show that the UA-BA combination can converge reliably under different network sizes ($|\mathcal{U}|=20$, $25$, and $30$). 
Notably, the sum rate improves significantly after the first iteration, but subsequent iterations yield limited gains. 
This is because, without invoking Algorithm \ref{alg:PA}, interference cannot be effectively managed. This limits further performance improvement. 
Moreover, the sum rate grows with the number of UEs, since a larger user pool increases the likelihood of favorable satellite-UE matching and more efficient bandwidth utilization.

\begin{figure}[tb]
	\centerline{\includegraphics[width=0.4\textwidth]{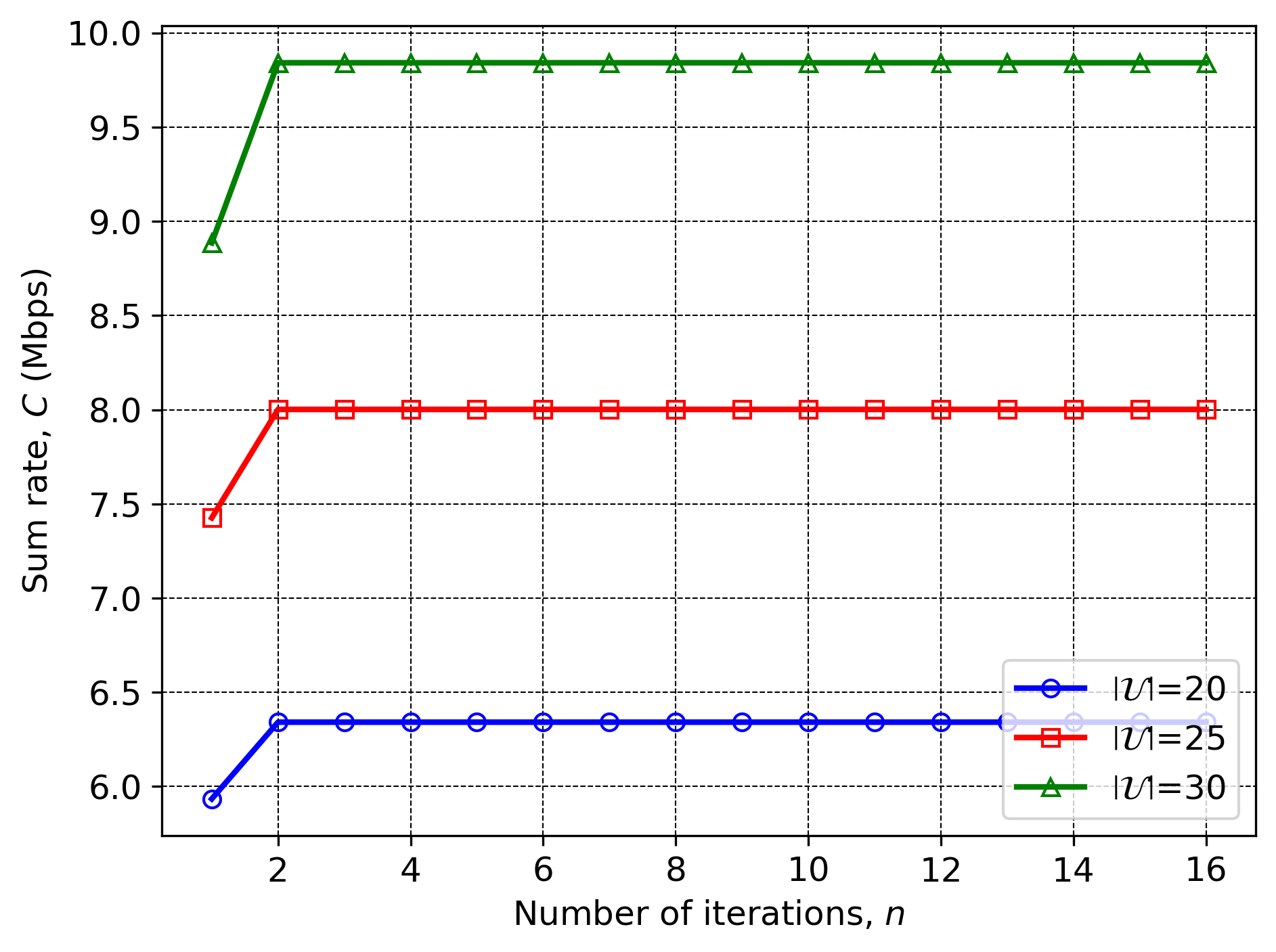}}
	\caption{Convergence of matching process ($r_{\text{min}}=0.3$ Mbps).}
	\label{fig:UABA_converge}
\end{figure}
\begin{figure}[tb]
	\centerline{\includegraphics[width=0.4\textwidth]{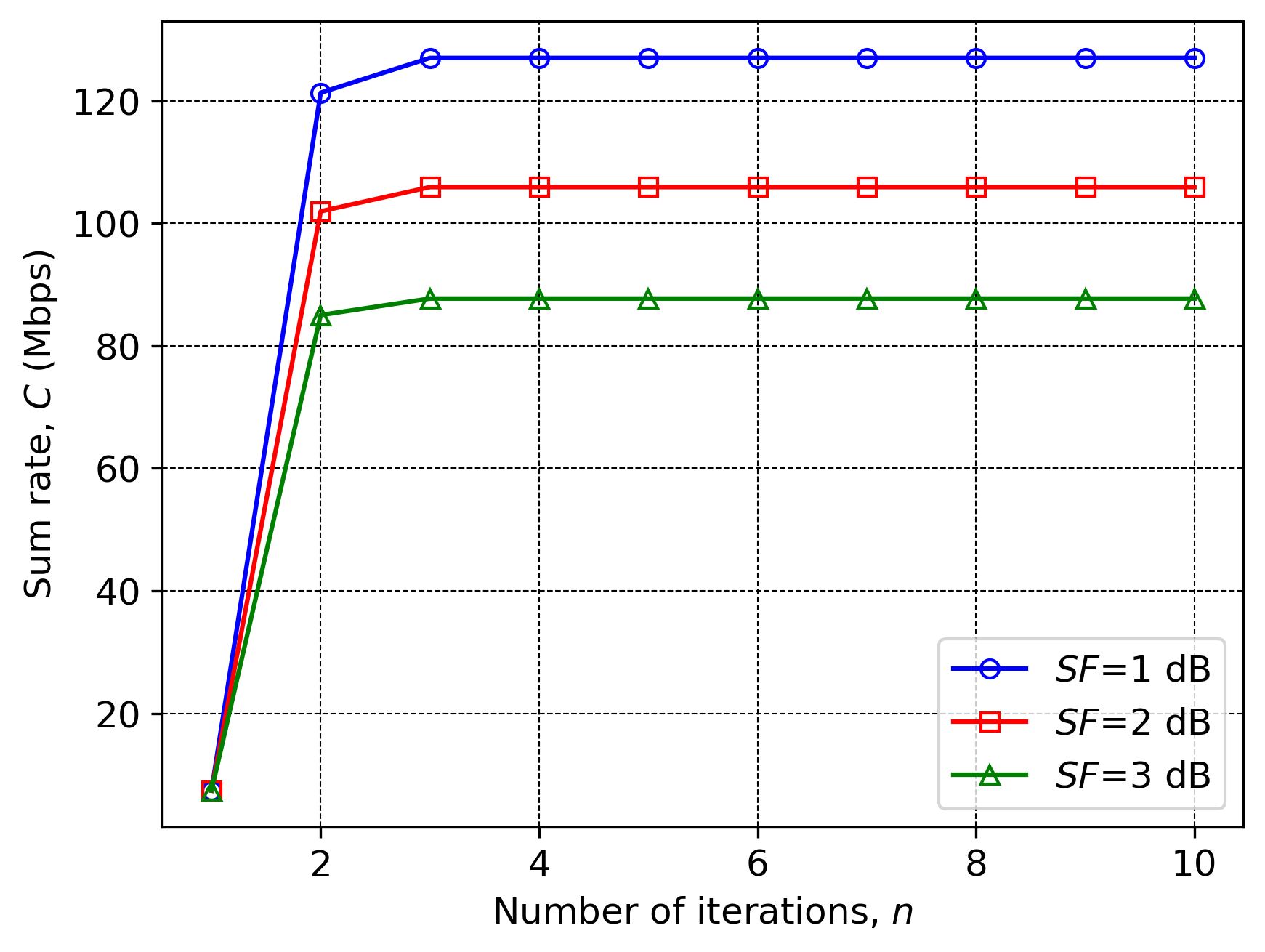}}
	\caption{Convergence of \stagename{} algorithm.}
	\label{fig:URP_converge}
\end{figure}
\subsubsection{Effectiveness of \stagename{} in \methodname{}} 
Fig.~\ref{fig:URP_converge} shows the convergence behavior of the proposed \stagename{} algorithm under different shadowing factors $SF=1,2,3$ dB. 
The results show that \stagename{} achieves stable convergence across all channel conditions. 
The results are consistent with the trend shown in Fig.~\ref{fig:Markov_converge}
.
As expected, better channel quality, corresponding to smaller $SF$, leads to higher achievable sum rates, since weaker shadowing implies less blockage and more reliable links. 
Moreover, \stagename{} converges rapidly, typically within two to three iterations. This highlights its efficiency in practical implementations.
Compared with the UA-BA combination, \stagename{} yields a pronounced performance gain, underscoring the effectiveness of jointly incorporating PA into the matching-based UA and BA.
 
\subsection{Performance Comparison}
\subsubsection{Gains of Joint UA-BA Matching in \methodname{}}

In order to verify the effectiveness of the matching game-based UA and BA methods, we evaluate the performance of \methodname{} against the following two methods: (i) Fixed-UA + Fixed-BA: both UA and BA are predetermined in a static manner without applying matching game theory;
(ii) Fixed-UA + Matching-BA: the UA is fixed, and the BA is optimized via the matching game algorithm (Algorithm \ref{alg:UABA}).
These methods are testified under the same Markov approximation and PA procedures, ensuring a fair comparison separating the contribution of matching game-based UA and BA.
The results are presented in Fig.~\ref{fig:UB_matching_comparison}, where the sum rate performance is plotted as a function of the cone angle $\varphi$ varying from $\frac{\pi}{10}$ to $\frac{7\pi}{20}$ in a step size of $\frac{\pi}{40}$. 
The achievable sum rate generally improves with larger $\varphi$ since a wider cone angle allows more satellites to be considered in the association and beam assignment process. 
The performance gaps among three methods highlight the importance of applying matching game theory to both UA and BA. 
Specifically, the double matching scheme consistently outperforms the baselines, demonstrating the synergistic gain of jointly adapting both UA and PA. 
In contrast, Fixed-UA + Matching-BA method achieves moderate improvement over the fixed baseline, but its performance saturates when $\varphi$ grows as fixing UA limits the ability to exploit the enlarged candidate satellite set. 
Finally, the fixed method exhibits the poorest performance and even suffers performance degradation at higher $\varphi$, where interference becomes more significant and cannot be effectively mitigated without adaptive UA and BA.
\begin{figure}[!t]
		\centerline{\includegraphics[width=0.4\textwidth]{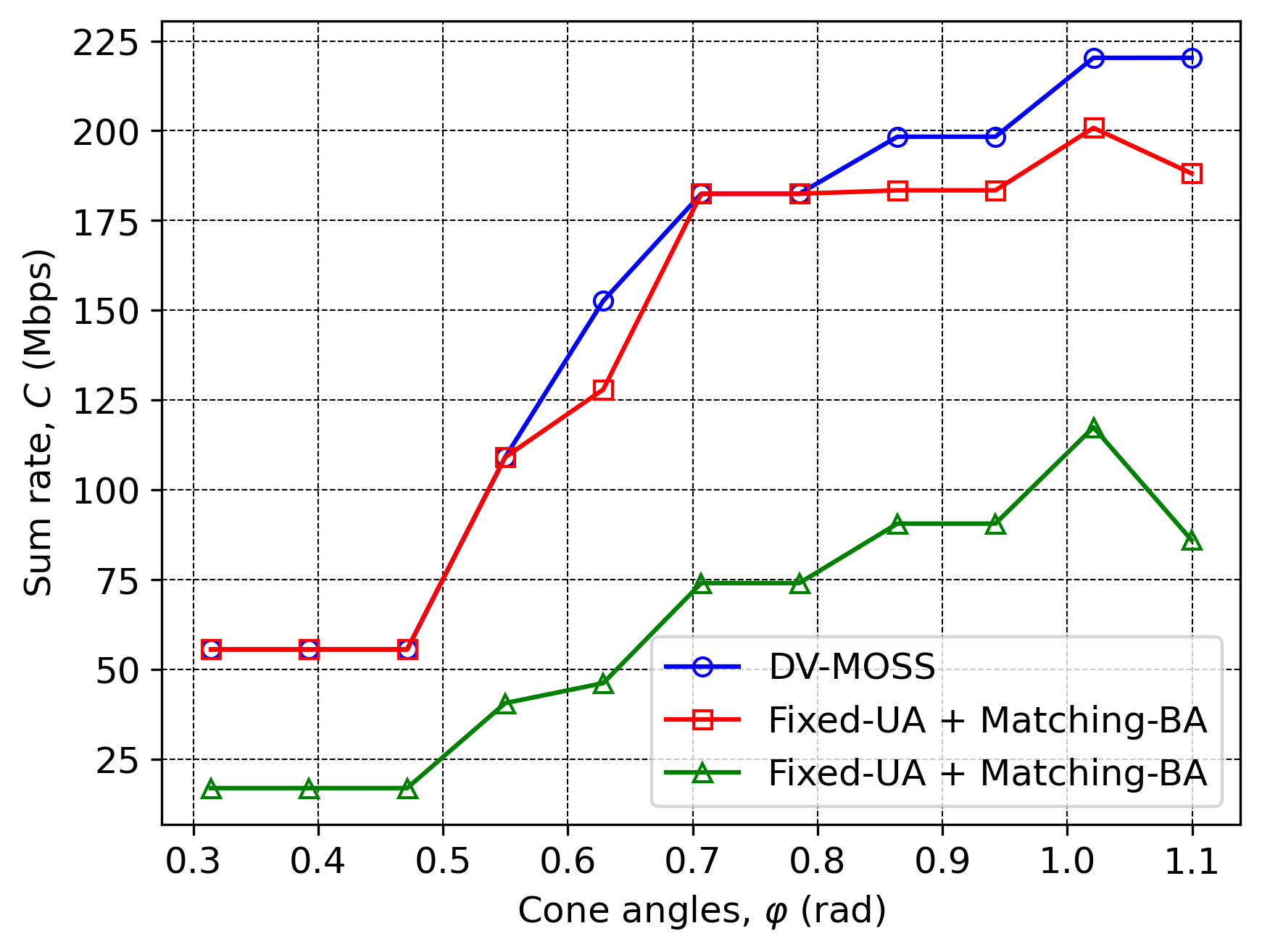}}
		\caption{Comparison among matching algorithms with $\abs{\mathcal{U}}=35, \varphi \in \left[\frac{\pi}{10},\frac{7\pi}{20}\right]$.}
		\label{fig:UB_matching_comparison}
	\end{figure}
	
\begin{figure}[!t]        \centerline{\includegraphics[width=0.4\textwidth]{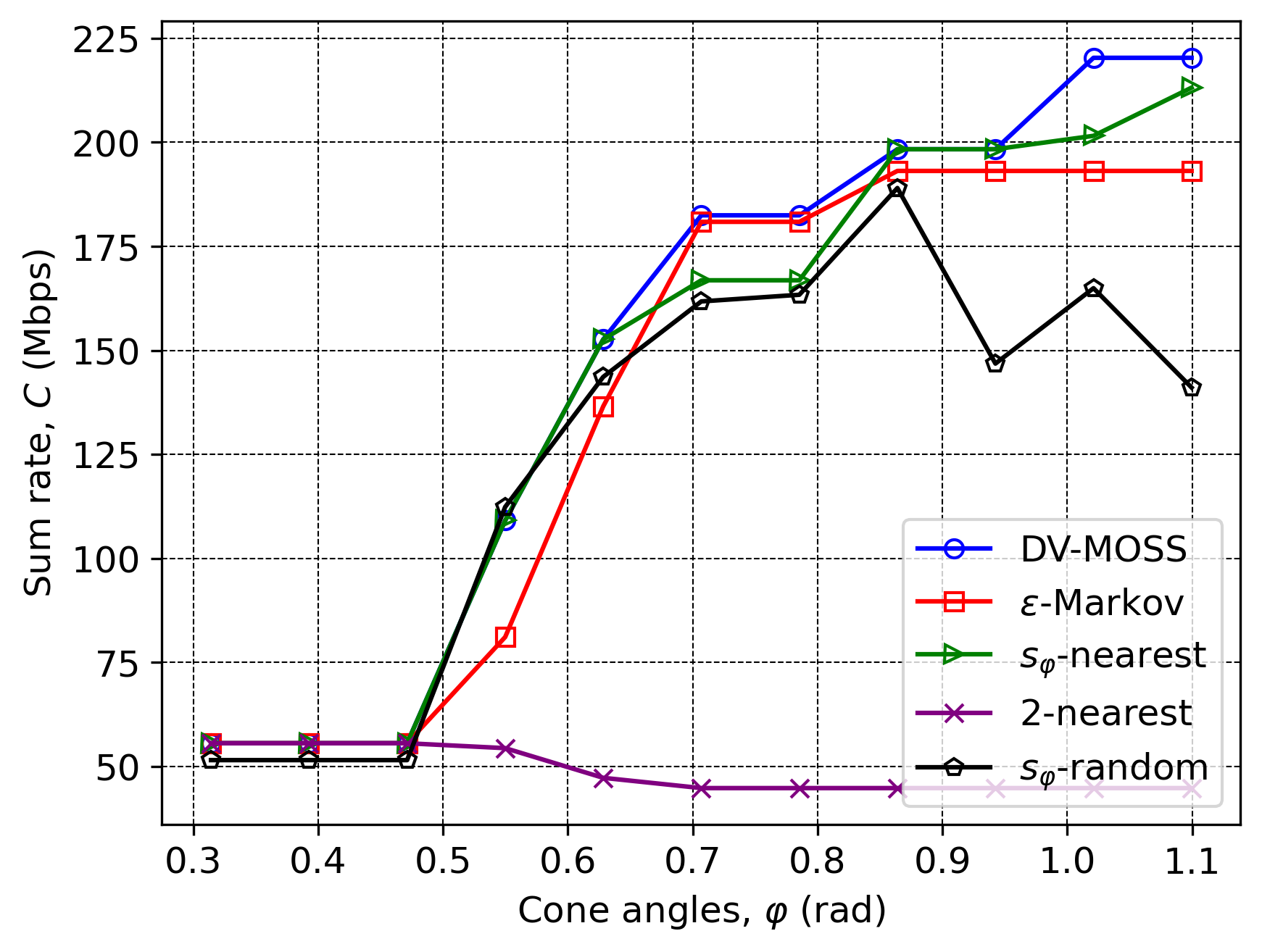}}
\caption{Comparison among selection algorithms with $\abs{\mathcal{U}}=35, \varphi \in \left[\frac{\pi}{10},\frac{7\pi}{20}\right]$.}
\label{fig:PerformanceBenchmark}
\end{figure}

\begin{figure*}[t]
		\centering
		\subfigure[]{\includegraphics[width=0.22\textwidth]{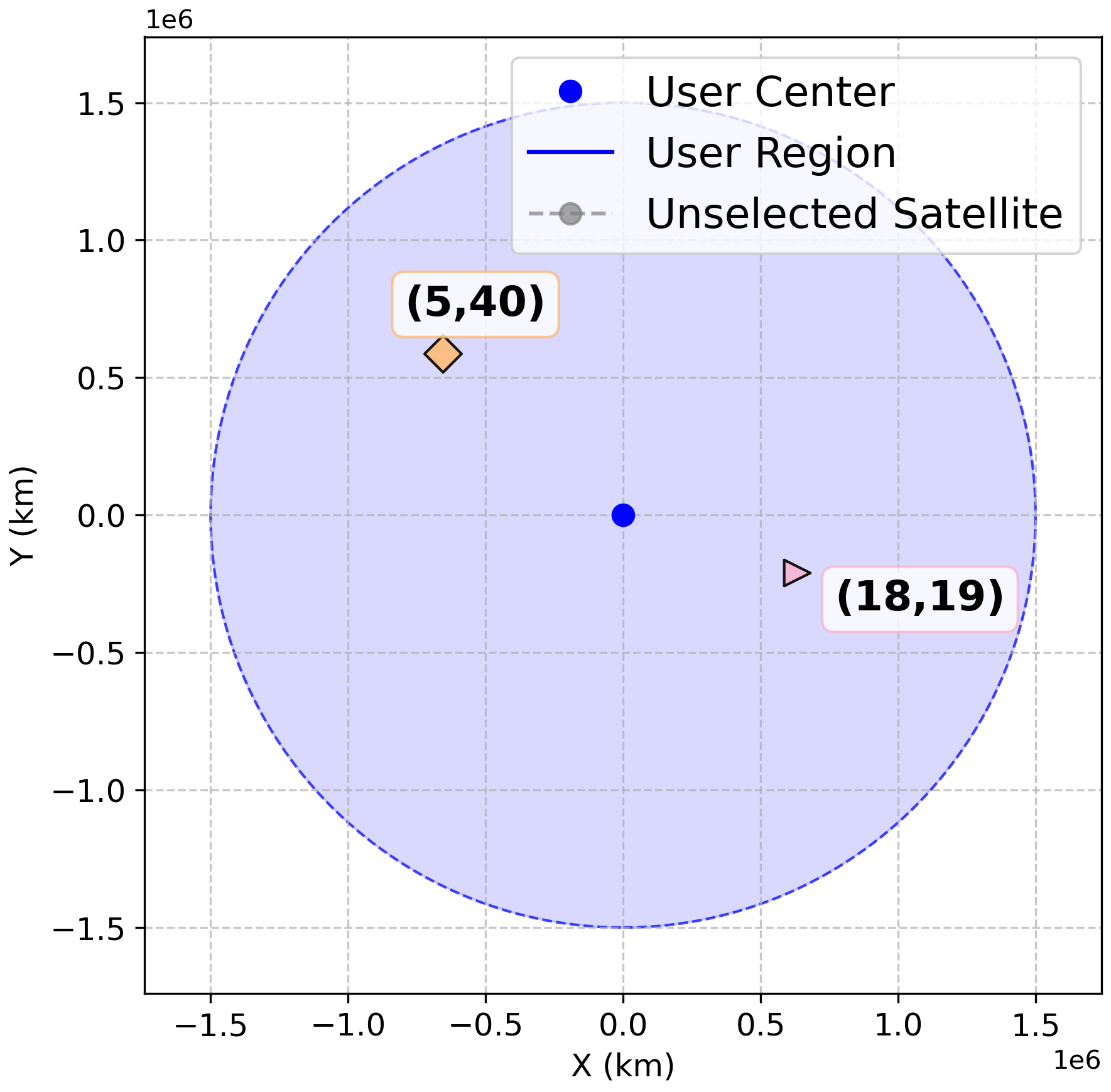}\label{fig:2dt1}} 
		\subfigure[]{\includegraphics[width=0.22\textwidth]{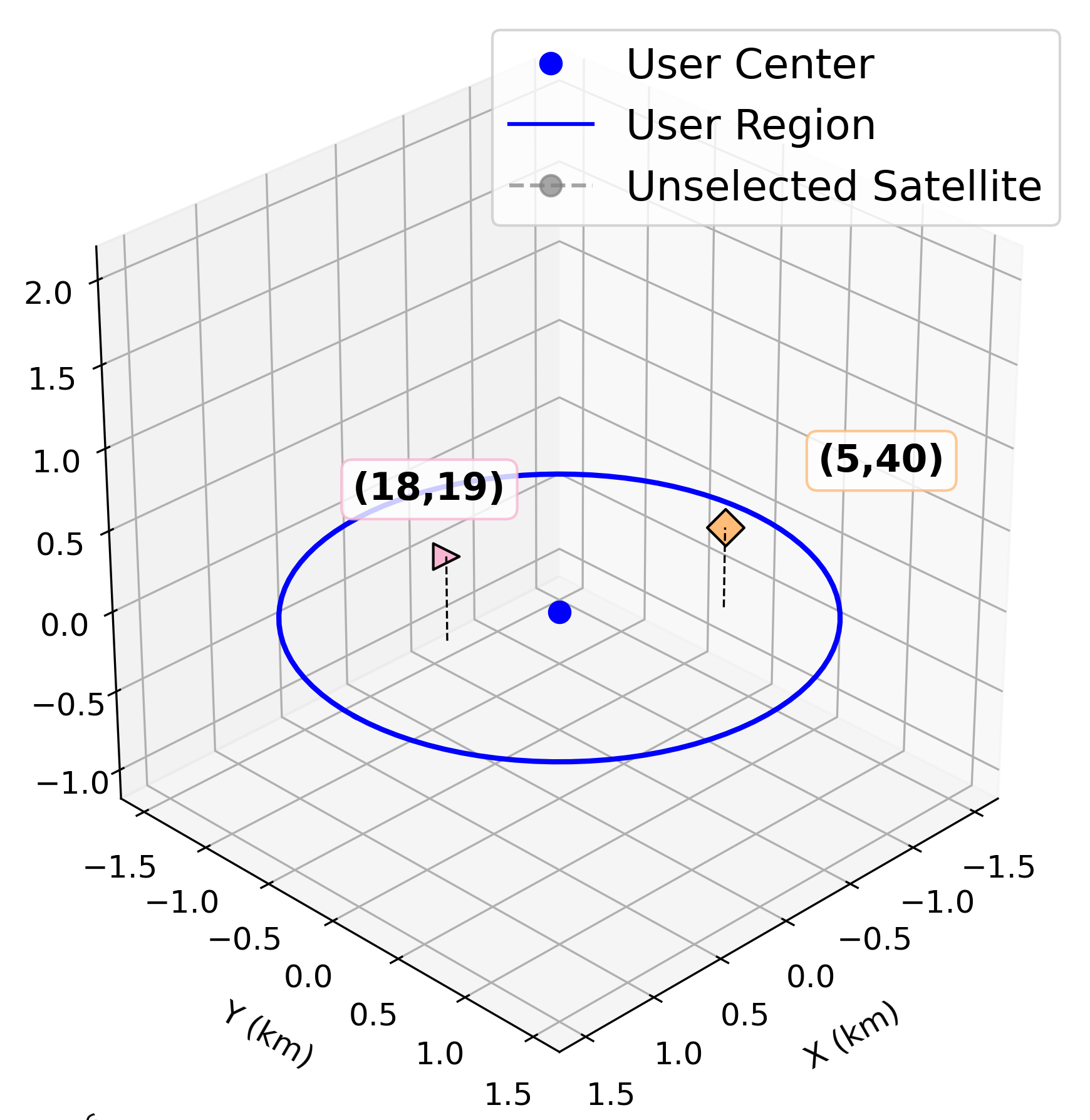}\label{fig:3dt1}} 
		\subfigure[]{\includegraphics[width=0.22\textwidth]{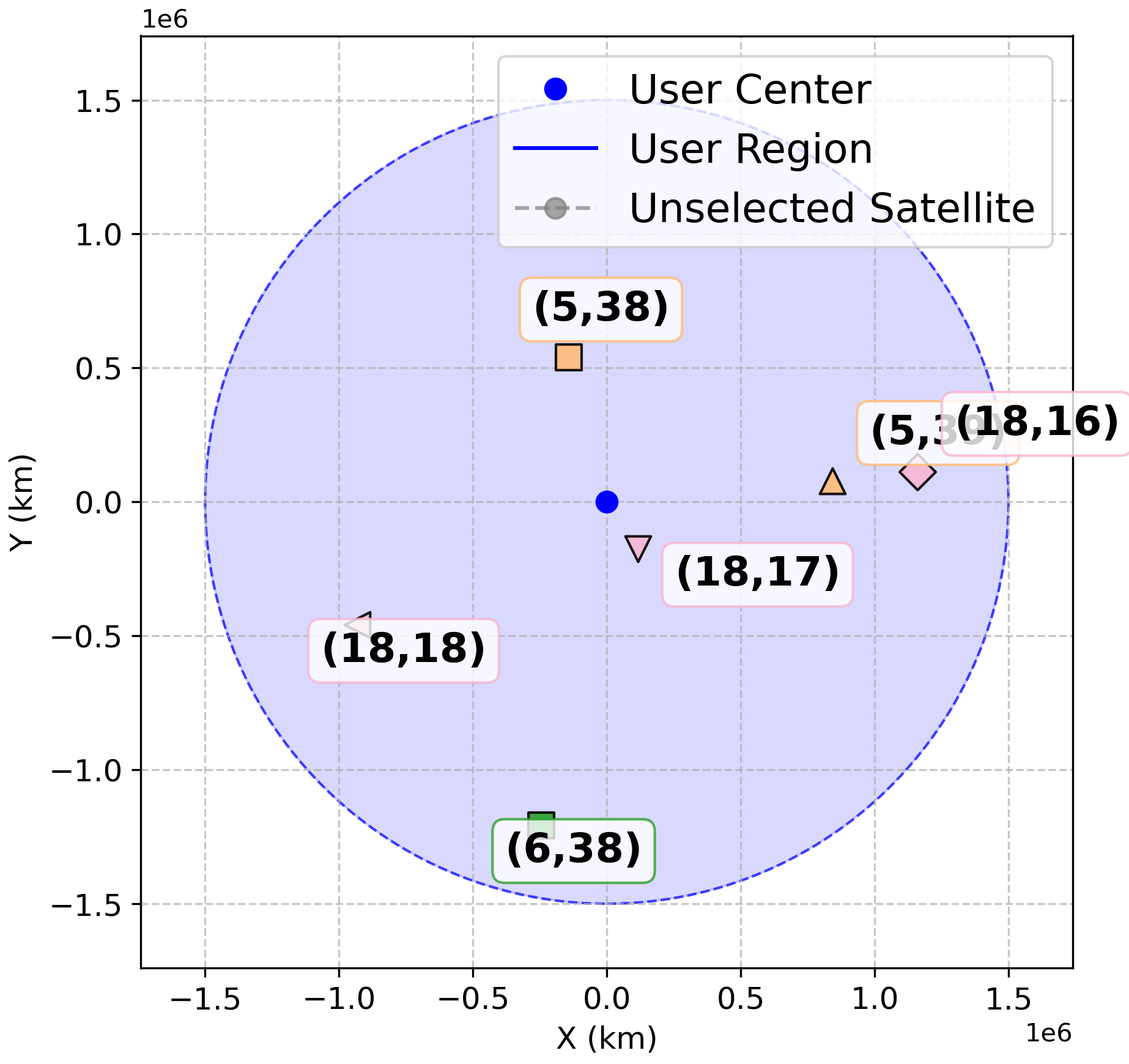}\label{fig:2dt7}} 
		\subfigure[]{\includegraphics[width=0.22\textwidth]{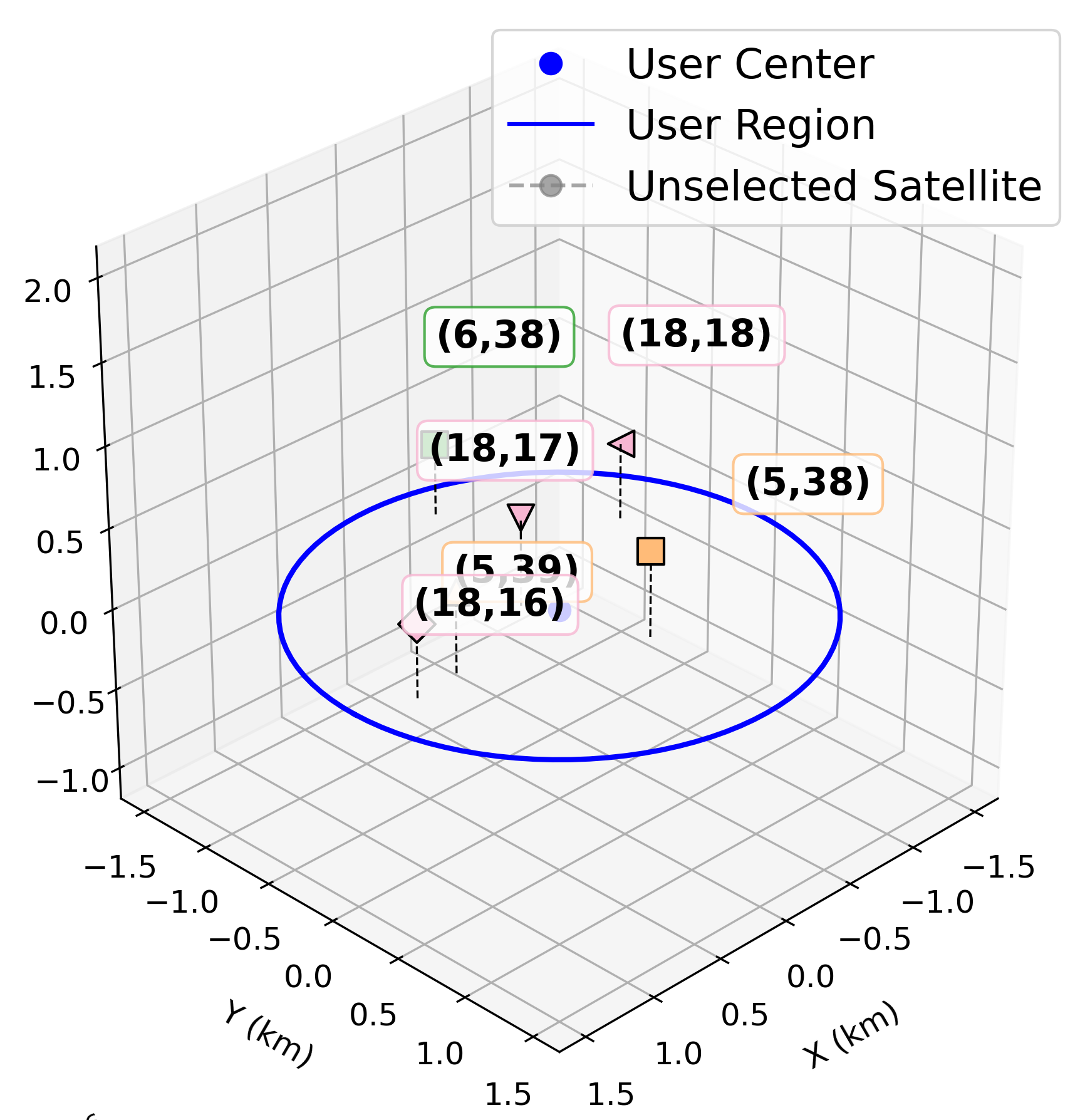}\label{fig:3dt7}} 
		\subfigure[]{\includegraphics[width=0.25\textwidth]{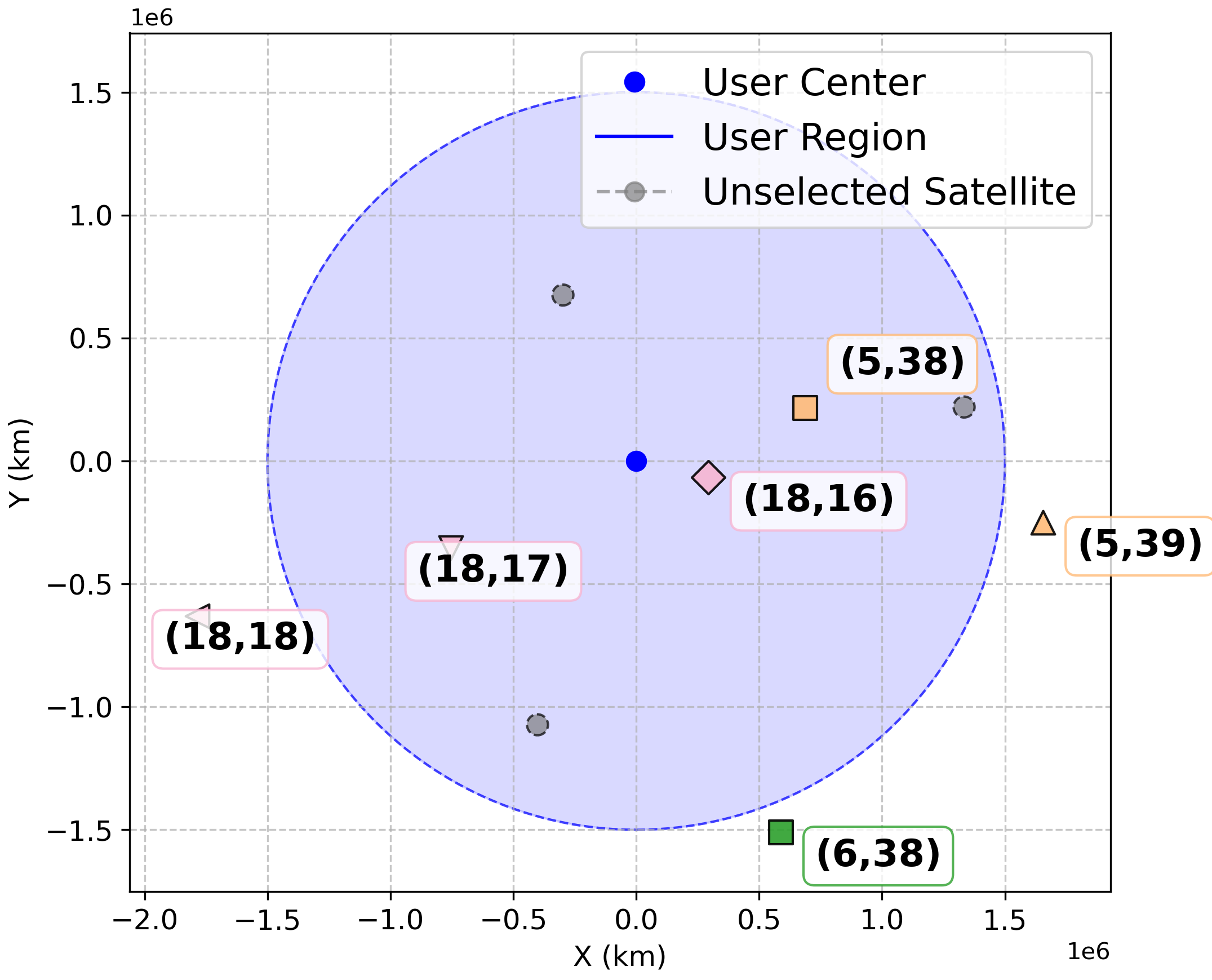}\label{fig:2dt9}}
		\subfigure[]{\includegraphics[width=0.22\textwidth]{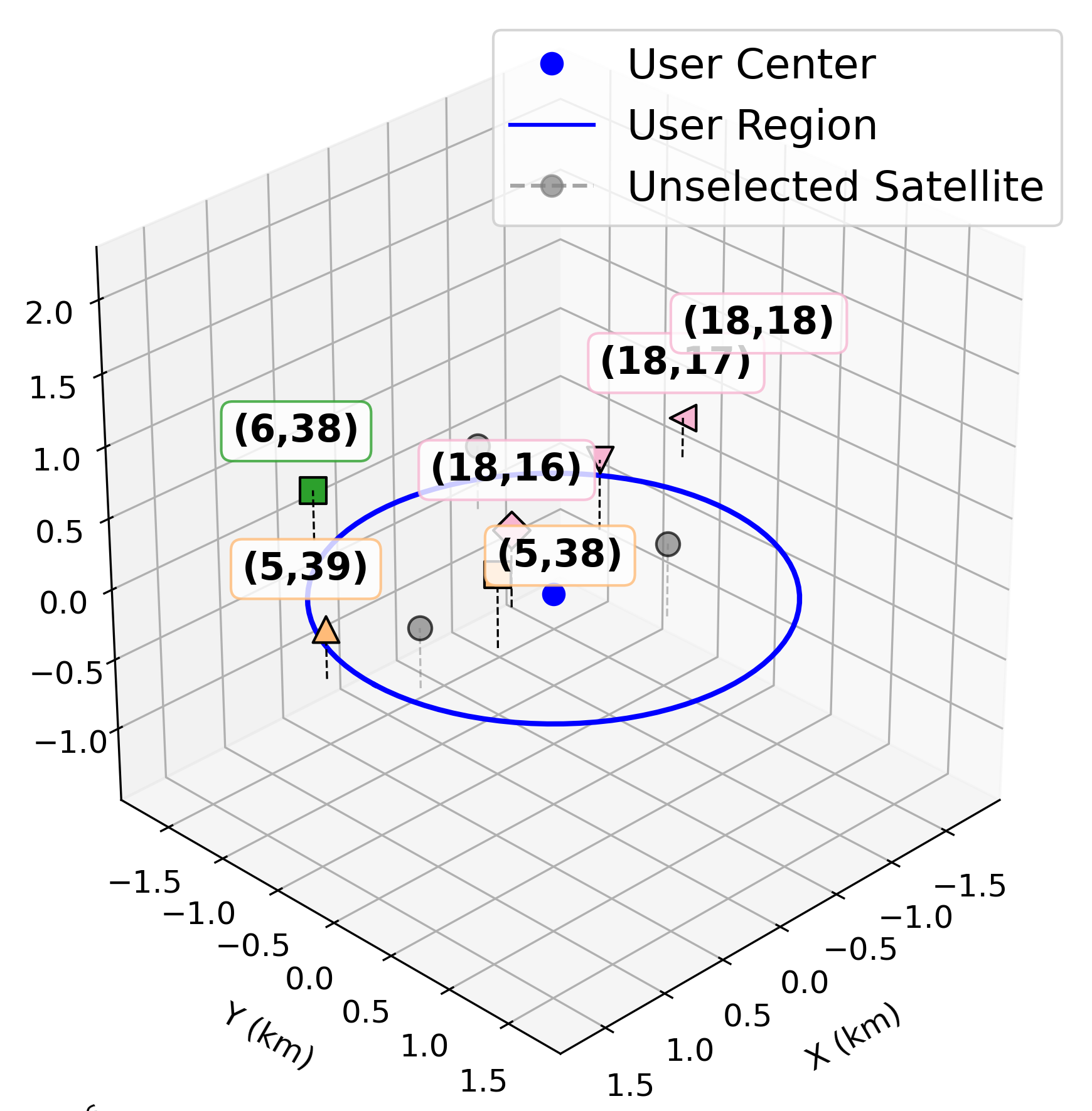}\label{fig:3dt9}} 
		\subfigure[]{\includegraphics[width=0.22\textwidth]{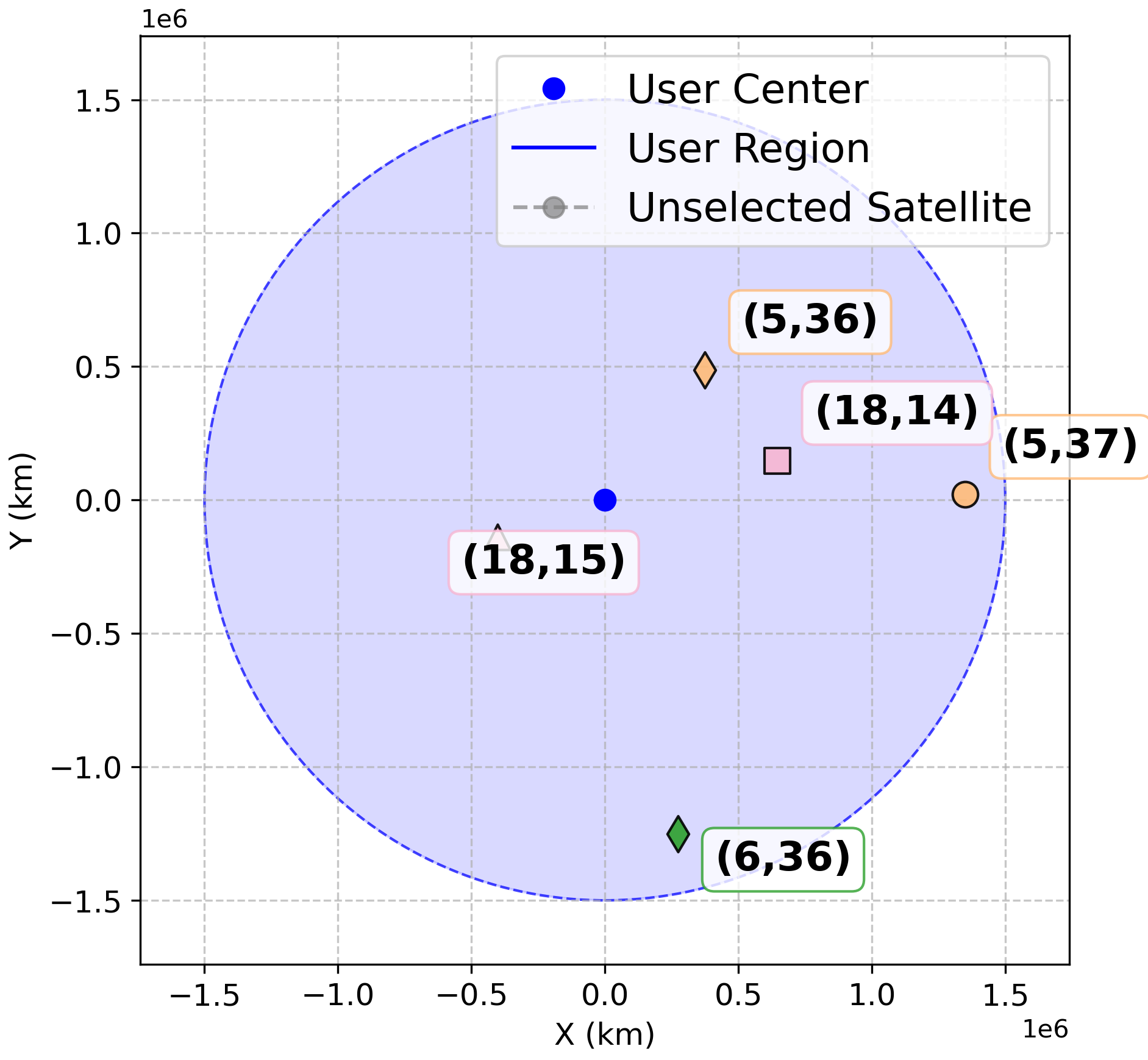}\label{fig:2dt13}} 
		\subfigure[]{\includegraphics[width=0.22\textwidth]{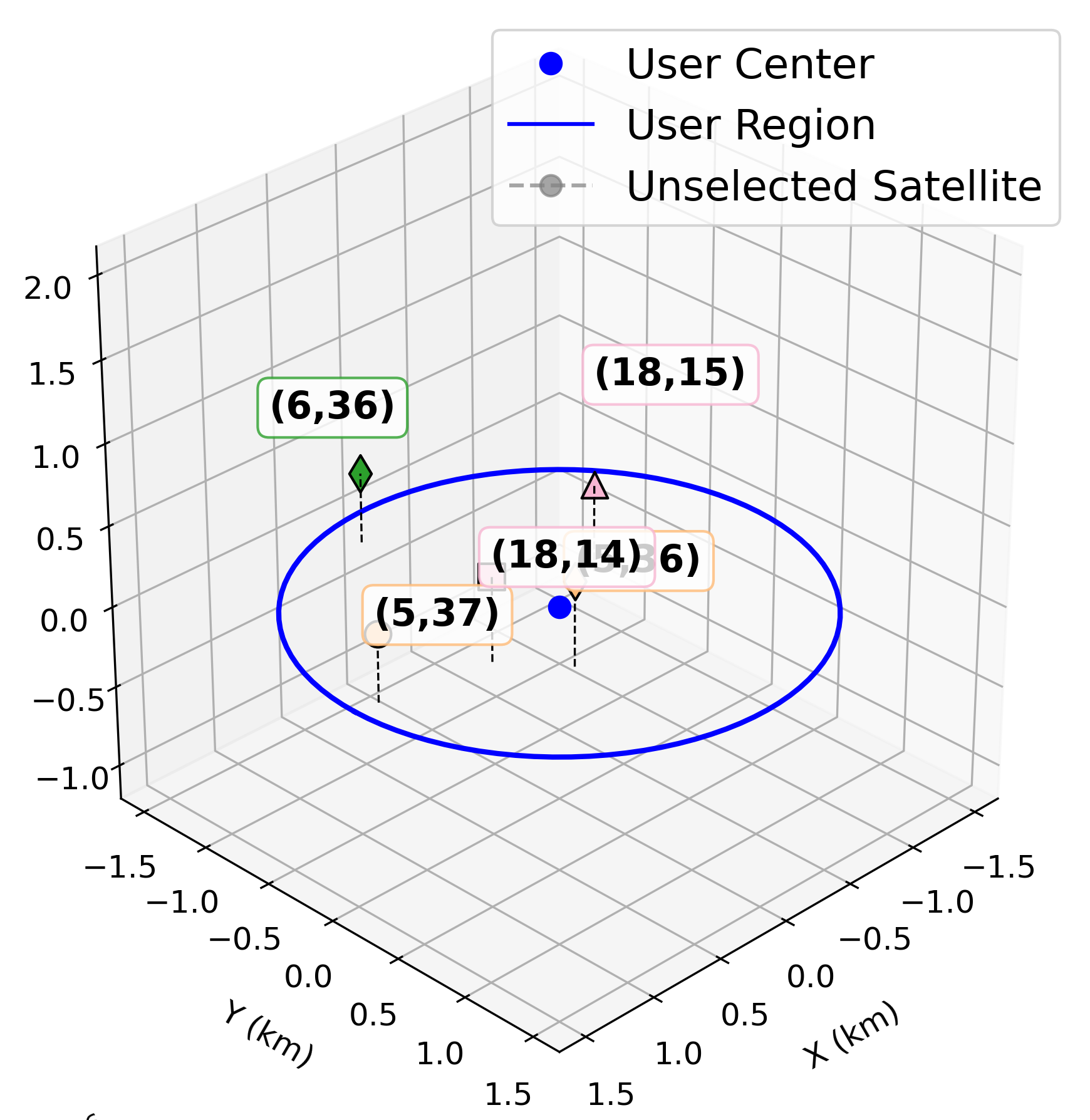}\label{fig:3dt13}} 
\caption{Visualized satellite trajectories: (a) 2D at $t=1$, (b) 3D at $t=1$, (c) 2D at $t=7$, (d) 3D at $t=7$, (e) 2D at $t=9$, (f) 3D at $t=9$, (g) 2D at $t=13$, and (h) 3D at $t=13$. ($\varphi=\frac{\pi}{5}$ rad).}\label{fig:satellite_movement}
\end{figure*}

\subsubsection{Comparative Evaluation of Satellite Selection Strategies under Dynamic Visibility}
To further gain the insight of \methodname{}, we conduct comparative experiments that highlight its capability not only in adaptively selecting satellites, but also in dynamically determining the number of active satellites with visibility variations and constellation size constraints. 
For this purpose, we select the following four schemes to compare with our proposed \methodname{} approach. 
\begin{itemize}
\item $\epsilon$-Markov \cite{eMarkov}: It follows the general framework of the proposed algorithm. The execution of either the E-stage or the C-stage is determined in a stochastic manner. 
\item $s_{\varphi}$-nearest: It selects the $s_{\varphi}$ satellites closest to the UE cluster center. 
\item $2$-nearest: It always selects the two closest satellites to the UE cluster center.
\item $s_{\varphi}$-random: It randomly selects $s_{\varphi}$ satellites without considering their locations.
\end{itemize}
Here, $s_{\varphi}$ denotes the number of service satellites determined by \methodname{} as a function of the cone angle $\varphi$.
For a fair comparison, all methods (except the $2$-nearest scheme) are assumed to select the same number of satellites, $s_{\varphi}$, with an identical \stagename{} process. 
The $2$-nearest method fixes the number of service satellites to $2$ irrespective of the cone angle.

Fig.~\ref{fig:PerformanceBenchmark} presents the sum rates of the proposed \methodname{} framework in comparison with the above baselines  under varying cone angles $\varphi \in \left[\tfrac{\pi}{10}, \tfrac{7\pi}{20}\right]$ for a network with $|\mathcal{U}|=35$. 
From the results, we observe that \methodname{} consistently achieves the highest sum rate across the entire range of cone angles. 
In particular, the performance gap between \methodname{} and the $2$-nearest scheme is remarkable.
Since the $2$-nearest scheme fixes the number of serving satellites to $2$, it fails to exploit the potential spatial diversity provided by larger visibility regions when $\varphi$ increases.
By contrast, \methodname{} dynamically adapts the number of selected satellites, thereby striking a balance between interference mitigation and multi-satellite diversity. 
This adaptivity translates directly into substantial throughput gains. 
Moreover, while the $s_{\varphi}$-nearest and $s_{\varphi}$-random baselines partially reflect the constellation size constraint by fixing the number of selected satellites to $s_{\varphi}$, they ignore the coupled impact of cone-angle-shaped visibility and interference. 
As a result, their performance tends to plateau or even degrade at larger cone angles. 
By contrast, \methodname{} jointly incorporates constellation size and visibility geometry into the selection and allocation process, maintaining consistently superior performance across all cone-angle settings.
In addition to the above observations, Fig.~\ref{fig:PerformanceBenchmark} shows that $\epsilon$-Markov method achieves competitive but inferior results compared to \methodname{} owing to its stochastic decision-making between exploration and consolidation. This stochastic behavior prevents consistent exploitation of favorable satellite constellations. 
These results collectively demonstrate that the proposed \methodname{} framework is not only adaptive in the number of satellites but also better suited to address visibility constraints and constellation dynamics than benchmarks.  

Finally, to evaluate the influence of the inherent \leoname{} dynamism on the number of selected service satellites, we simulated the time-varying selection results obtained by \methodname{}. 
Fig.~\ref{fig:satellite_movement} illustrates the constellation snapshot of the chosen satellites under the proposed workflow, with simulation results recorded from $t=1$ to $t=13$ minutes. 
Two important observations can be made. 
First, the number of selected satellites varies continuously over time. This is a direct consequence of the dynamic evolution of the constellation and the changes of satellites visibility. 
Second, as highlighted in Fig.~\ref{fig:3dt9} at $t=9$ minutes, there exist three visible satellites that are not selected by \methodname{} above the service area. 
According to the $s_{\varphi}$-nearest rule, they should have been chosen instead of satellites $s_{6,38}$ and $s_{18,18}$. 
This indicates that merely selecting the geographically closest satellites does not always guarantee optimal performance, thereby underscoring the advantage of the proposed strategy.

\section{Conclusions}\label{Conclusions}
We investigated the joint optimization of satellite selection, UA, BA, and PA in \leoname{} constellations under dynamic visibility constraints. 
To tackle the inherent NP-hardness of the formulated problem, we proposed the \methodname{} framework that integrates Markov approximation with a matching game-based approach for UA and BA. 
Simulation results on the Walker constellation demonstrated that \methodname{} achieves stable convergence under various channel conditions and significantly outperforms four state-of-the-art baselines. 
In particular, the proposed algorithm provides, on average, a $7.85$\% sum-rate gain over the best-performing benchmark with affordable computational complexity. Moreover, dynamic simulations show that the selected service satellites adapt seamlessly to time-varying constellation visibility, thereby verifying the robustness and generality of the approach.
An additional insight is that the commonly used closest-satellite selection rule does not necessarily yield the best performance, underscoring the necessity of adaptive and visibility-aware satellite selection in practical \leoname{} systems.

\appendices

\bibliographystyle{IEEEtran}
\bibliography{IEEEexample}

\begin{thebibliography}{10}
\providecommand{\url}[1]{#1}
\csname url@samestyle\endcsname
\providecommand{\newblock}{\relax}
\providecommand{\bibinfo}[2]{#2}
\providecommand{\BIBentrySTDinterwordspacing}{\spaceskip=0pt\relax}
\providecommand{\BIBentryALTinterwordstretchfactor}{4}
\providecommand{\BIBentryALTinterwordspacing}{\spaceskip=\fontdimen2\font plus
\BIBentryALTinterwordstretchfactor\fontdimen3\font minus
  \fontdimen4\font\relax}
\providecommand{\BIBforeignlanguage}[2]{{%
\expandafter\ifx\csname l@#1\endcsname\relax
\typeout{** WARNING: IEEEtran.bst: No hyphenation pattern has been}%
\typeout{** loaded for the language `#1'. Using the pattern for}%
\typeout{** the default language instead.}%
\else
\language=\csname l@#1\endcsname
\fi
#2}}
\providecommand{\BIBdecl}{\relax}
\BIBdecl

\bibitem{6G_1}
P.~Yang, Y.~Xiao, M.~Xiao, and S.~Li, ``{6G} wireless communications: Vision
  and potential techniques,'' \emph{IEEE Netw.}, vol.~33, no.~4, pp. 70--75,
  Jul. 2019.

\bibitem{LEO1}
Z.~Xiang, X.~Gao, K.-X. Li, and X.-G. Xia, ``Massive {MIMO} downlink
  transmission for multiple {LEO} satellite communication,'' \emph{IEEE Trans.
  Commun.}, vol.~72, no.~6, pp. 3352--3364, Feb. 2024.

\bibitem{LEO_potential}
\BIBentryALTinterwordspacing
Z.~Jia, C.~Dong, K.~Guo, and Q.~Wu, ``The potential of {{LEO}} satellites in
  {6G} space-air-ground enabled access networks,'' 2023. [Online]. Available:
  \url{https://arxiv.org/abs/2307.00234}
\BIBentrySTDinterwordspacing

\bibitem{how_many_sats}
R.~Deng, B.~Di, H.~Zhang, L.~Kuang, and L.~Song, ``Ultra-dense {LEO} satellite
  constellations: How many {LEO} satellites do we need,'' \emph{IEEE Trans.
  Wireless Commun.}, vol.~20, no.~8, pp. 4843--4857, Mar. 2021.

\bibitem{RAwithoutSS3}
B.~Lee, J.-H. Lee, and Y.-C. Ko, ``Fairness-aware throughput optimization in
  low earth orbit satellite communication networks using a 3d spherical
  coordinate system,'' \emph{IEEE Trans. Wireless Commun.}, vol.~23, no.~10,
  pp. 14\,007--14\,019, Jun. 2024.

\bibitem{LongtermSSNoRA1}
C.~Dai \emph{et~al.}, ``A fast satellite selection algorithm based on nswoa for
  multi-constellation {LEO} satellite dynamic opportunistic navigation,''
  \emph{Applied Sciences}, vol.~15, no.~13, p. 7564, Jul. 2025.

\bibitem{RAwithoutSS1}
\BIBentryALTinterwordspacing
Z.~Li and B.~Shang, ``Enriched k-tier heterogeneous satellite networks model
  with user association policies,'' 2025. [Online]. Available:
  \url{https://arxiv.org/abs/2505.09917}
\BIBentrySTDinterwordspacing

\bibitem{RAwithoutSS4}
L.-H. Shen, Y.~Ho, K.-T. Feng, L.-L. Yang, S.-H. Wu, and J.-M. Wu,
  ``Hierarchical multi-agent multi-armed bandit for resource allocation in
  multi-{LEO} satellite constellation networks,'' in \emph{Proc. IEEE 97th
  Vehicular Technology Conference (VTC-Spring)}, Florence, Italy, Aug. 2023,
  pp. 1--5.

\bibitem{RAwithoutSS5}
H.~Nguyen-Kha, V.~N. Ha, E.~Lagunas, S.~Chatzinotas, and J.~Grotz,
  ``{LEO}-to-user assignment and resource allocation for uplink transmit power
  minimization,'' in \emph{Proc. 26th International ITG Workshop on Smart
  Antennas and 13th Conference on Systems, Communications, and Coding},
  Braunschweig, Germany, Apr. 2023, pp. 1--6.

\bibitem{RAwithoutSS6}
Y.~Hu, F.~Qiu, F.~Zheng, and J.~Zhao, ``Multi-dimensional resource allocation
  strategy for {LEO} satellite communication uplinks based on deep
  reinforcement learning,'' \emph{Journal of Cloud Computing}, vol.~13, no.~1,
  p.~56, Mar. 2024.

\bibitem{SingleSS1}
Y.~Wu, G.~Hu, F.~Jin, and J.~Zu, ``A satellite handover strategy based on the
  potential game in {LEO} satellite networks,'' \emph{IEEE Access}, vol.~7, pp.
  133\,641--133\,652, Sep. 2019.

\bibitem{StaticMultipleSS1}
J.~Li, B.~Liu, and M.~Peng, ``Joint beamforming design and satellite selection
  for integrated communication and navigation in {LEO} satellite networks,'' in
  \emph{Proc. 2024 IEEE Global Communications Conference}, Cape Town, South
  Africa, Mar. 2024, pp. 3273--3278.

\bibitem{StaticMultipleSS3}
Y.~Guo, C.~Skouroumounis, S.~Chatzinotas, and I.~Krikidis, ``On user
  association in large-scale heterogeneous leo satellite network,'' \emph{IEEE
  Trans. Aerosp. Electronic Syst.}, vol.~60, no.~5, pp. 5994--6010, Oct. 2024.

\bibitem{StaticMultipleSS4}
Y.~Liu, C.~Lu, and W.~Xu, ``Satellite access points selection for cell-free
  {LEO} networks,'' \emph{International Journal of Electronics and
  Communications}, vol. 202, p. 156026, Dec. 2025.

\bibitem{LongtermSSNoRA2}
\BIBentryALTinterwordspacing
E.~Kim, I.~P. Roberts, T.~Lee, and J.~G. Andrews, ``Satellite selection for
  in-band coexistence of dense {LEO} networks,'' 2025. [Online]. Available:
  \url{https://arxiv.org/abs/2503.15262}
\BIBentrySTDinterwordspacing

\bibitem{LongtermSSNoRA3}
L.~Yang, X.~Yang, and Z.~Bu, ``A conditional handover strategy based on
  trajectory prediction for high-speed terminals in {LEO} satellite networks,''
  in \emph{Proc. 2024 IEEE International Conference on Communications Workshops
  (ICC Workshops)}, Denver, CO, USA, Aug. 2024, pp. 1697--1701.

\bibitem{LongtermSSNoRA4}
Z.~Hongtao, W.~Zhenyong, L.~Dezhi, Y.~Mingchuan, and G.~Qing, ``Double
  grouping-based group handover scheme for mega {LEO} satellite networks,''
  \emph{China Commun.}, vol.~22, no.~2, pp. 77--94, Feb. 2025.

\bibitem{LongtermSSNoRA6}
F.~Wang, D.~Jiang, Z.~Wang, J.~Chen, and T.~Q.~S. Quek, ``Seamless handover in
  {LEO} based non-terrestrial networks: Service continuity and optimization,''
  \emph{IEEE Trans. Commun.}, vol.~71, no.~2, pp. 1008--1023, Dec. 2023.

\bibitem{LongtermSSRA1}
M.~Y. Abdelsadek, G.~K. Kurt, and H.~Yanikomeroglu, ``Distributed massive
  {MIMO} for {LEO} satellite networks,'' \emph{IEEE Open J. Commun. Soc.},
  vol.~3, pp. 2162--2177, Nov. 2022.

\bibitem{LongtermSSRA2}
J.~Li \emph{et~al.}, ``Collaborative ground-space communications via
  evolutionary multi-objective deep reinforcement learning,'' \emph{IEEE J.
  Sel. Areas Commun.}, vol.~42, no.~12, pp. 3395--3411, Sep. 2024.

\bibitem{LongtermSSRA3}
Y.~Wang, T.~Ma, X.~Qin, X.~Zhang, Z.~Zhang, and H.~Zhou, ``Satellite–aircraft
  handover in ultra-dense {LEO} satellite networks,'' \emph{IEEE Trans. Veh.
  Technol.}, vol.~74, no.~3, pp. 4946--4961, Nov. 2025.

\bibitem{LongtermSSRA5}
T.~Leng, Y.~Xu, G.~Cui, and W.~Wang, ``Caching-aware intelligent handover
  strategy for {LEO} satellite networks,'' \emph{Remote Sensing}, vol.~13,
  no.~11, p. 2230, Jun. 2021.

\bibitem{LongtermSSRA6}
J.~Zhu, Y.~Sun, and M.~Peng, ``Beam management in low earth orbit satellite
  networks with random traffic arrival and time-varying topology,'' \emph{IEEE
  Trans. Veh. Technol.}, vol.~73, no.~9, pp. 13\,352--13\,367, May 2024.

\bibitem{elevation_angle}
A.~Fastenbauer, M.~Kaneko, P.~Svoboda, and M.~Rupp, ``Impact of elevation angle
  on multi-beam {LEO} satellite communication systems,'' \emph{IEEE Access},
  vol.~13, pp. 71\,723--71\,737, Apr. 2025.

\bibitem{conference}
Y.~Sun, Y.~Gao, M.~Xiao, and A.~Honoré, ``Optimizing satellite selection and
  user association in multi-orbit satellite constellations,'' in \emph{Proc.
  IEEE International Conference on Communications}, Montreal, QC, Canada, Jun.
  2025, pp. 4185--4190.

\bibitem{quadriga}
S.~Jaeckel, L.~Raschkowski, K.~B{\"o}rner, L.~Thiele, F.~Burkhardt, and
  E.~Eberlein, ``{QuaDRiGa}---quasi deterministic radio channel generator, user
  manual and documentation,'' Fraunhofer Heinrich Hertz Institute, Tech. Rep.
  v2.2.0, 2019.

\bibitem{markov}
M.~Chen, S.~C. Liew, Z.~Shao, and C.~Kai, ``Markov approximation for
  combinatorial network optimization,'' \emph{IEEE Trans. Inf. Theory},
  vol.~59, no.~10, pp. 6301--6327, May 2013.

\bibitem{matching}
T.~LeAnh \emph{et~al.}, ``Matching theory for distributed user association and
  resource allocation in cognitive femtocell networks,'' \emph{IEEE Trans. Veh.
  Technol.}, vol.~66, no.~9, pp. 8413--8428, Mar. 2017.

\bibitem{nasa_smallsat_link_2019}
A.~Wong, R.~Pogorzelski, D.~Sullivan, and K.~Bhasin, ``An optimum
  space-to-ground communication concept for cubesat platform utilizing {NASA}
  space network and near earth network,'' in \emph{Proc. {AIAA/USU} Conf. Small
  Satellite}, Logan, Ut, United States, Aug. 2019, pp. 1--19.

\bibitem{eMarkov}
Q.~Zhu and J.~Zheng, ``A {V2V} task offloading decision algorithm for multi-rat
  vehicular networks,'' in \emph{Proc. 2024 IEEE Global Communications
  Conference}, Cape Town, South Africa, Mar. 2024, pp. 740--745.

\end{thebibliography}

\end{document}